\newif\ifreport\reporttrue
\theoremstyle{definition}
\newtheorem{definition}{Definition}
\newcommand{\age}{\Delta}
\def\blue{\color{black}}
\def\red{\color{red}}
\newcommand{\ignore}[1]{}
\newtheorem{lemma}{Lemma}
\newtheorem{theorem}{Theorem}
\newtheorem{corollary}{Corollary}
\begin{document}
\IEEEoverridecommandlockouts
\title{Minimizing the Age of the Information to Multiple Sources}
\title{\huge	 Age-Optimal Updates of Multiple Information Flows}
\title{\huge	 Age-Optimal Updates of Multiple Information Flows: A Sample-path Approach}
\title{Age of Information Minimization for Multiple Information Flows: A Sample-path Approach}
\title{Age-Optimal Multi-Flow Status Updating with Errors: A Sample-Path Approach}


\author{Yin Sun and Sastry Kompella 
\thanks{This paper was presented in part at the IEEE INFOCOM Age of Information (AoI) Workshop in 2018 \cite{SunAoIWorkshop2018}.}
\thanks{Yin Sun's work is supported in part by the NSF grant CNS-2239677 and the ARO grant W911NF-21-1-0244.}
\thanks{Yin Sun is with the Department of ECE, Auburn University, Auburn, AL 36849 USA, email: yzs0078@auburn.edu.}
\thanks{Sastry Kompella is with Nexcepta Inc., Gaithersburg, MD 20878 USA, e-mail: sk@ieee.org.}

}

\maketitle
\thispagestyle{plain}
\pagestyle{plain}
\begin{abstract}

In this paper, we study an age of information minimization problem in \emph{continuous-time} and \emph{discrete-time} status updating systems that involve \emph{multiple packet flows}, \emph{multiple servers}, and \emph{transmission errors}. Four scheduling policies are proposed. 
We develop a unifying sample-path approach and use it to show that, when the packet generation and arrival times  are synchronized across the flows, the proposed policies are (near) optimal for minimizing any \emph{time-dependent}, \emph{symmetric}, and \emph{non-decreasing} penalty function of the ages of the flows over time in a stochastic ordering sense.
\end{abstract}

\begin{IEEEkeywords}
Age of information, Status Updating, Errors, Multiple Channels, Multiple Flows, Sample-path Approach.
\end{IEEEkeywords}
\section{Introduction}

%

In many information-update and networked control systems, such as news updates, stock trading, autonomous driving, remote surgery, robotics control, and real-time surveillance, information usually has the greatest value when it is fresh. A metric for information freshness, called  \emph{age of information} or simply  \emph{age}, was introduced in \cite{Song1990,KaulYatesGruteser-Infocom2012}. Consider a flow of status update packets that are sent from a source to a destination through a channel. 
Let $U(t)$ be the time stamp (i.e., generation time) of the {newest update that the destination has received} by time $t$. {Age of information, as a function of time $t$, is defined as} 
$\Delta (t) = t - U(t)$, which is the time elapsed since the newest update was generated.


In recent years, there have been a lot of research efforts on (i) analyzing the distributional  quantities of age $\Delta (t)$ for various network models and (ii) controlling $\Delta (t)$ to keep the destination's information as fresh as possible, e.g., 
\cite{SunAoIWorkshop2018,Song1990,KaulYatesGruteser-Infocom2012,2012CISS-KaulYatesGruteser,2012ISIT-YatesKaul,LiInfocom2015,6875100,CostaCodreanuEphremides_TIT, KamKompellaEphremidesTIT,Icc2015Pappas,2015ISITHuangModiano,Suninfocom2016,AgeOfInfo2016,Bedewy2016,BedewyJournal2017,Bedewy2017,BedewyMultihop2017,SunBook,Kosta2017Age,Yates2016, AliTCOM2022,IgorAllerton2016,HsuTWC2017,Vishrant2017,Arunabh2019,He2018,8822722,8812616,9137714,8406891,SunMutualInformation2018,SunNonlinear2019,shisher2021age,ShisherMobiHoc22, shisher2023learning0, shisher2023learning, pan2022age, pan2022optimal, bedewy2021low, ornee2021sampling, bedewy2021optimal, tang2022sampling, ornee2023whittle,yates2021AgeSurvey}. If there is  a single flow of status update packets, the Last Generated, First Served (LGFS) update transmission policy, in which the last generated packet is served the first, has been shown to be (nearly) optimal for minimizing the age process $\{\age(t),t\geq 0\}$ in a stochastic ordering sense for queueing networks with multiple servers or multiple hops \cite{Bedewy2016,BedewyJournal2017,Bedewy2017,BedewyMultihop2017,SunBook}. 
These results hold for arbitrary packet generation times at the information source (e.g., a sensor) and arbitrary packet arrival times at the transmitter's queueing buffer; they also hold for minimizing any non-decreasing functional 
$\phi(\{\age(t),t\geq 0\})$ of the age process $\{\age(t),t\geq 0\}$. If packets arrive at the queue in the order of their generation times, then the LGFS policy reduces to the Last Come, First Served (LCFS) policy, thus demonstrating the (near) age-optimality of the LCFS policy. These studies motivated us to delve deeper into the design of scheduling policies to minimize age of information in more complex networks involving \emph{multiple flows of status update packets} and \emph{transmission errors}, where each flow is from one source node to a destination node. In this scenario, the transmission scheduler must compare not only packets from the same flow, but also packets from different flows. Additionally, the presence of transmission errors adds an additional layer of complexity to the scheduling problem. As a result, addressing these challenges becomes crucial in achieving efficient age minimization in such systems. 

\begin{figure}
\centering 
\includegraphics[width=0.49\textwidth]{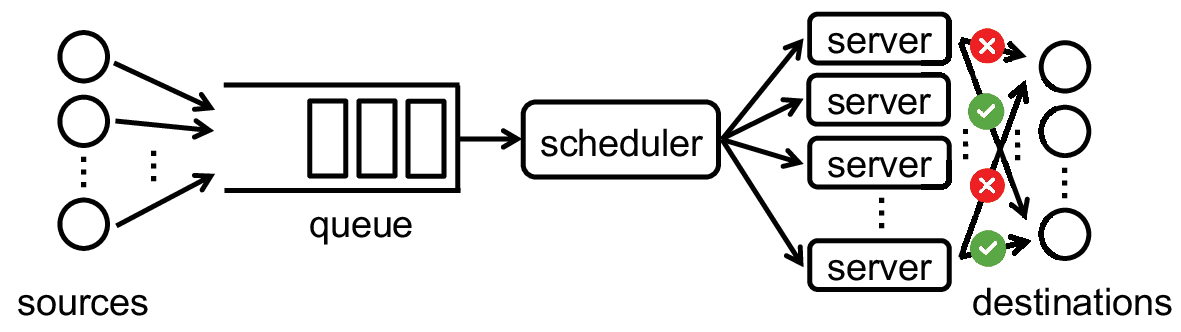} 
\vspace{-0mm}
\caption{System model.}
\label{fig_model} 
\vspace{-5mm}
\end{figure} 

In this paper, we investigate age-optimal scheduling in \emph{continuous-time} and \emph{discrete-time} status updating systems that involve \emph{multiple flows}, \emph{multiple servers}, and \emph{transmission errors}, as illustrated in Figure \ref{fig_model}. Each server can transmit packets to their respective destinations, one packet at a time. Different servers are not allowed to simultaneously transmit packets from the same flow. {We assume that the packet generation and arrival times are \emph{synchronized} across the flows. 
In other words, when a packet from flow $n$ arrives at the queue at time $A_i$, with its generation time denoted as $S_i$ (where $S_i\leq A_i$), one corresponding packet from each flow simultaneously received at time $A_i$, and all of these packets were generated at the same time $S_i$.
In practice, synchronized packet generations and arrivals occur when there is a single source and multiple destinations (e.g.,  \cite{IgorAllerton2016}), or in periodic sampling where multiple sources are synchronized by the same clock, which is common in  monitoring and control systems \ifreport
(e.g.,  \cite{Phadke1994,Sivrikaya2004})\fi.}
We develop a unifying sample-path approach and use it to show that the proposed scheduling policies can achieve optimal or near-optimal age performance in a quite strong sense (i.e., in terms of stochastic ordering of age-penalty stochastic processes). 
The contributions of this paper are summarized as follows:
\begin{itemize}
\item Let $\bm{\Delta}(t)$ denote the age vector of multiple flows. We introduce an age penalty function $p_t (\bm\age(t))$ to represent the level of dissatisfaction for having aged information at the destinations at time $t$, where $p_t$ can be any \emph{time-dependent}, \emph{symmetric}, and \emph{non-decreasing} function of the age vector $\bm{\Delta}(t)$. 

\item For continuous-time status updating systems with one or multiple flows, one or multiple servers, and \emph{i.i.d.}~exponential transmission times, we propose a \emph{Preemptive, Maximum Age First, Last Generated First Served (P-MAF-LGFS) scheduling policy}.\footnote{
This new P-MAF-LGFS policy is suitable for both single-server and multi-server systems, whereas the original P-MAF-LGFS policy, as presented in \cite{SunAoIWorkshop2018}, was specifically tailored for single-server scenarios. 
} 
If the packet generation and arrival times are synchronized across the flows, then for any age penalty function $p_t$ defined above, any number of flows, any number of servers, any synchronized packet generation and arrival times, and regardless the presence of transmission errors or not, the P-MAF-LGFS policy is proven to minimize the continuous-time age penalty process $\{p_t (\bm\age(t)), t\geq 0\}$ among all causal policies in a stochastic ordering sense (see Theorem \ref{thm1} and Corollary \ref{coro1}). Theorem \ref{thm1} is more general than \cite[Theorem 1]{SunAoIWorkshop2018}, as the latter was established for the special case of single-server status updating systems without transmission errors. In addition, if packet replication is allowed, we show that a \emph{Preemptive, Maximum Age First, Last Generated First Served scheduling policy with packet Replications (P-MAF-LGFS-R)} is age-optimal for minimizing the age penalty process $\{p_t (\bm\age(t)), t\geq 0\}$ in terms of stochastic ordering (see Corollary \ref{corollary_new}).

\item For continuous-time status updating systems with one or multiple flows, one or multiple servers, and \emph{i.i.d.}~New-Better-than-Used (NBU) transmission times (which include exponential transmission times as a special case),
 age-optimal multi-flow scheduling is quite difficult to achieve. In this case, 
we consider an age lower bound called the \emph{Age of Served Information} and propose a \emph{Non-Preemptive, Maximum Age of Served Information First, Last Generated First Served (NP-MASIF-LGFS) scheduling policy}. The NP-MASIF-LGFS policy is shown to be near age-optimal. Specifically, it is within an additive gap from the optimum for minimizing the expected time-average of the average age of the flows, where the gap is equal to the mean transmission time of one packet (see Theorem \ref{thm3} and Corollary \ref{coro4}). This additive sub-optimality gap is quite small. 

\item For discrete-time status updating systems with one or multiple flows and one or multiple servers, we propose a \emph{Discrete Time, Maximum Age First, Last Generated First Served (DT-MASIF-LGFS) scheduling policy}. If the packet generation and arrival times are synchronized across the flows, then for any age penalty function $p_t$, any number of flows, any number of servers, any synchronized packet generation and arrival times, and regardless the presence of transmission errors or not, the DT-MAF-LGFS policy is proven to minimize the discrete-time age penalty process $\{p_t (\bm\age(t)), t= 0, T_s, 2 T_s, \ldots\}$ among all causal policies in a stochastic ordering sense, where $T_s$ is the fundamental time unit of the discrete-time systems (see Theorem \ref{thm4}). 

\end{itemize}

\ifreport
Our results can be potentially applied to: (i) cloud-hosted Web services where the servers in Figure \ref{fig_model} represent a pool of threads (each for a TCP connection) connecting a front-end proxy node to clients \cite{Fox:1997:CSN:269005.266662}, (ii) industrial robotics and factory automation systems where multiple sensor-output flows are sent to a wireless AP and then forwarded to a system monitor and/or controller \cite{Gungor2009}, and (iii) Multi-access Edge Computing (MEC) that can process fresh data (e.g., data for video analytics, location services, and IoT) locally at the very edge of the mobile network. 
\fi


\section{Related Work}\label{sec_related_work}

The age of information concept has attracted a significant surge of research interest; see, e.g., \cite{SunAoIWorkshop2018,Song1990,KaulYatesGruteser-Infocom2012,2012CISS-KaulYatesGruteser,2012ISIT-YatesKaul,LiInfocom2015,6875100,CostaCodreanuEphremides_TIT, KamKompellaEphremidesTIT,Icc2015Pappas,2015ISITHuangModiano,Suninfocom2016,AgeOfInfo2016,Bedewy2016,BedewyJournal2017,Bedewy2017,BedewyMultihop2017,SunBook,Yates2016, AliTCOM2022,IgorAllerton2016,HsuTWC2017,Vishrant2017,Arunabh2019,He2018,8822722,8812616,9137714,8406891,SunMutualInformation2018,SunNonlinear2019,shisher2021age,ShisherMobiHoc22, shisher2023learning0, shisher2023learning, pan2022age, pan2022optimal, bedewy2021low, ornee2021sampling, bedewy2021optimal, tang2022sampling, ornee2023whittle, Kosta2017Age} and a recent survey \cite{yates2021AgeSurvey}. 
Initially, research efforts were centered on analyzing and comparing the age performance of different queueing disciplines, such as First-Come, First-Served (FCFS) \cite{KaulYatesGruteser-Infocom2012,2012ISIT-YatesKaul,KamKompellaEphremidesTIT,2015ISITHuangModiano}, preemptive and non-preemptive Last-Come, First-Served (LCFS) \cite{2012CISS-KaulYatesGruteser,Yates2016}, and packet management \cite{CostaCodreanuEphremides_TIT, Icc2015Pappas}. In \cite{Bedewy2016,BedewyJournal2017,Bedewy2017,BedewyMultihop2017,SunBook}, a sample-path approach was developed to prove that Last-Generated, First-Served (LGFS)-type policies are optimal or near-optimal for minimizing a broad class of age metrics in multi-server and multi-hop queueing networks with a single packet flow. When packets arrive in the order of their generation times, the LGFS policy becomes the well-known Last Come, First Served (LCFS) policy. Hence, the LCFS policy is (near) age-optimal in these queueing networks. 

In recent years, researchers have expanded the aforementioned studies to consider age minimization in multi-flow discrete-time status updating systems  \cite{IgorAllerton2016,HsuTWC2017,Vishrant2017,Arunabh2019}. In \cite{IgorAllerton2016}, the authors utilized a sample-path method to establish the optimality of the Maximum Age First (MAF) policy in minimizing the time-averaged sum age of multiple flows. This investigation focused on discrete-time systems with periodic arrivals and a single broadcast channel, which is susceptible to \emph{i.i.d.} transmission errors.
Moreover, in \cite{HsuTWC2017}, a Markov decision process (MDP) approach was adopted to prove that the MAF policy minimizes the time-averaged sum age of multiple flows in discrete-time systems with Bernoulli arrivals, a single broadcast channel, and no buffer. In this bufferless setup, arriving packets are discarded if they cannot be transmitted immediately in the arriving time slot. In \cite{Vishrant2017}, the authors studied discrete-time systems with multiple flows and multiple ON/OFF channels, where the state of each channel (ON/OFF) is known for making scheduling decisions. It was demonstrated that a Max-Age Matching policy is asymptotically optimal for minimizing non-decreasing symmetric functions of the age of the flows as the numbers of flows and channels increase. In \cite{Arunabh2019}, it was shown that the MAF policy minimizes the Maximum Age of multiple flows in discrete-time systems with periodic arrivals and a single broadcast channel susceptible to \emph{i.i.d.} transmission errors, where the transmission error probability may vary across the flows. In \cite{Ruogu2013}, a sample-path method was employed to demonstrate that the round-robin policy minimizes a service regularity metric called \emph{time-since-last-service} in discrete-time systems with multiple flows and transmission errors. In the definition of time-since-last-service, a user can receive service even if its queue is empty. Consequently, time-since-last-service bears similarities to the age of information concept, albeit these two metrics are different. 
The present paper, alongside its conference version \cite{SunAoIWorkshop2018}, complements the aforementioned studies in several essential ways: (i) It considers general time-dependent, symmetric, and non-decreasing age penalty functions $p_t$. (ii) Both continuous-time and discrete-time systems with multiple flows, multiple channels (a.k.a. servers), and transmission errors are investigated. (iii) The paper establishes near age-optimal scheduling results in scenarios where achieving age-optimality is inherently challenging.


\section{System Model}\label{sec:model}
\subsection{Notations and Definitions}\label{sec:def}



We  use lower case letters such as $x$ and $\bm{x}$, respectively, to represent deterministic scalars and vectors. In the vector case, a subscript will index the components of a vector, such as $x_i$.
We use $x_{[i]}$ 
to denote the $i$-th largest 
component of vector $\bm{x}$. Let $\bm 0$ denote   a vector with all 0 components.
A function $f: \mathbb{R}^n\rightarrow \mathbb{R}$ is termed \emph{symmetric} if $f(\bm{x})= f(x_{[1]},\ldots, x_{[n]})$ for all $\bm{x} \in \mathbb{R}^n$. A function $f: \mathbb{R}^n\rightarrow \mathbb{R}$ is termed \emph{separable} if there exists functions $f_1,\ldots,f_n$ of one variable such that $f(\bm{x}) = \sum_{i=1}^n f_i(x_i)$ for all $\bm{x} \in \mathbb{R}^n$. 
The composition of functions $f$ and $g$ is denoted by $f \circ g( x) = f(g ( x))$. 
 For any $n$-dimensional vectors $\bm{x}$ and $\bm{y}$, the elementwise vector ordering $x_i\leq y_i$, $i=1,\ldots,n$, is denoted by $\bm{x} \leq \bm{y}$. 
Let $\mathcal{A}$ 
and $\mathcal{U}$ 
denote sets and events. 
For all random variable ${X}$ and event $\mathcal{A}$, let $[{X}|\mathcal{A}]$ denote a random variable with the conditional distribution of ${X}$ for given $\mathcal{A}$. We will need the following definitions:

%

\begin{definition}\label{def_variable}
\emph{Stochastic Ordering of Random Variables \cite{StochasticOrderBook}}: 
A random variable ${X}$ is said to be \emph{stochastically smaller} than another random variable ${Y}$, denoted by ${X}\leq_{\text{st}}{Y}$, if 
\begin{align}
\Pr({X}>t) \leq \Pr({Y}>t),~\forall~t\in \mathbb{R}. 
\end{align}
\end{definition}
\begin{definition}\label{def_vector}
\emph{Stochastic Ordering of Random Vectors \cite{StochasticOrderBook}}: 
A set $\mathcal{U} \subseteq \mathbb{R}^n$ is called \emph{upper}, if $\bm{y} \in \mathcal{U}$ whenever $\bm{y}\geq \bm{x}$ and $\bm{x} \in \mathcal{U}$. 
Let $\bm{X}$ and $\bm{Y}$ be two $n$-dimensional random vectors, $\bm{X}$ is said to be \emph{stochastically smaller} than $\bm{Y}$, denoted by $\bm{X}\leq_{\text{st}}\bm{Y}$, if 
\begin{align}
\Pr(\bm{X}\in \mathcal{U}) \leq \Pr(\bm{Y}\in \mathcal{U})~\text{for all upper sets}~\mathcal{U}\subseteq \mathbb{R}^n.
\end{align}
\end{definition}
\begin{definition}\label{def_process}
\emph{Stochastic Ordering of Stochastic Processes \cite{StochasticOrderBook}}: 
Let $\{X(t),t\in [0,\infty) \}$ and $\{Y(t),t\in [0,\infty) \}$ be two stochastic processes, $\{X(t), t\in [0,\infty) \}$ is said to be \emph{stochastically smaller} than $\{Y(t),t\in [0,\infty) \}$, denoted by $\{X(t),t\in [0,\infty) \}\leq_{\text{st}}\{Y(t),t\in [0,\infty)\}$, if for all integer $n$ and $0\leq t_1< t_2<\ldots<t_n$, it holds that 
\begin{align}
\!\!\!(X(t_1),X(t_2),\ldots,X(t_n)) \!\leq_{\text{st}}\! (Y(t_1),Y(t_2),\ldots,Y(t_n)).\!\!\!
\end{align}
\end{definition}

\ignore{A random variable ${X}$ is said to be \emph{stochastically smaller} than another random variable ${Y}$, denoted by ${X}\leq_{\text{st}}{Y}$, if $\Pr({X}>x) \leq \Pr({Y}>x)$ for all~$x\in \mathbb{R}$.
A set $\mathcal{U} \subseteq \mathbb{R}^n$ is called \emph{upper}, if $\bm{y} \in \mathcal{U}$ whenever $\bm{y}\geq \bm{x}$ and $\bm{x} \in \mathcal{U}$. 
A random vector $\bm{X}$ is said to be \emph{stochastically smaller} than another random vector $\bm{Y}$, denoted by $\bm{X}\leq_{\text{st}}\bm{Y}$, if $\Pr(\bm{X}\in \mathcal{U}) \leq \Pr(\bm{Y}\in \mathcal{U})$ for all upper sets ~$\mathcal{U}\subseteq \mathbb{R}^n$. 
A stochastic process $\{X(t), t\in [0,\infty) \}$ is said to be \emph{stochastically smaller} than another stochastic process $\{Y(t),t\in [0,\infty) \}$, denoted by $\{X(t),t\in [0,\infty) \}\leq_{\text{st}}\{Y(t),t\in [0,\infty)\}$, if for all integer $n$ and $0\leq t_1< t_2<\ldots<t_n$, it holds that $(X(t_1),X(t_2),\ldots,X(t_n)) \!\leq_{\text{st}}\! (Y(t_1),Y(t_2),\ldots,Y(t_n))$.}

\ignore{
Let us use $\mathcal{U}$ 
to denote sets. A set $\mathcal{U} \subseteq \mathbb{R}^n$ is called \emph{upper}, if $\bm{y} \in \mathcal{U}$ whenever $\bm{y}\geq \bm{x}$ and $\bm{x} \in \mathcal{U}$. 

}

A functional is a mapping from functions to real numbers. A functional $\phi$ is termed \emph{non-decreasing} if $\phi(\{X(t), t\in [0,\infty)\}) \leq \phi(\{Y(t), t\in [0,\infty)\})$ whenever $X(t)\leq Y(t)$ for $t\in [0,\infty)$.
We remark that $\{X(t),t\in [0,\infty) \}\leq_{\text{st}}\{Y(t),t\in [0,\infty)\}$  if, and only if,  \cite{StochasticOrderBook}
\begin{equation}\label{eq_order}
\mathbb{E}[\phi(\{X(t), t\in [0,\infty)\} )] \leq \mathbb{E}[\phi(\{Y(t), t\in [0,\infty)\} )]
\end{equation}
holds for all non-decreasing functional $\phi$, provided that the expectations in \eqref{eq_order} exist.

\subsection{Queueing System Model}\label{sec:queuemodel}

Consider the status updating system illustrated in Fig. \ref{fig_model}, where $N$ flows of status update packets are sent through a queue with an infinite buffer and $M$ servers. Let $s_n$ and $d_n$ denote the source and destination nodes of flow $n$, respectively. It is possible for multiple flows to share either the same source node or the same destination node.

A scheduler assigns packets from the transmitter's queue to servers over time. The queue contains packets from different flows, and each packet can be assigned to any available server. Each server is capable of transmitting only one packet at a time. Different servers are not allowed to simultaneously transmit packets from the same flow.
The packet transmission times are independent and identically distributed (\emph{i.i.d.})~across both servers and packets, with a finite mean  $1/\mu$. The packet transmissions are susceptible to \emph{i.i.d.} errors with an error probability $q\in[0,1)$, 
occurring at the end of the packet transmission time intervals.
The scheduler is made aware of transmission errors once they occur. In the event of such a error, the packet is promptly returned to the queue, where it awaits the next transmission opportunity. if $q= 0$, then there is no transmission errors.



The system starts to operate at time $t=0$.  The $i$-th packet of flow $n$ is generated  at the source node $s_n$ at time $S_{n,i}$, arrives at the queue at time $A_{n,i}$, and is delivered to the destination $d_n$ at time $D_{n,i}$ such that $0\leq S_{n,1} \leq S_{n,2}\leq\ldots$ and $S_{n,i}\leq A_{n,i}\leq D_{n,i}$.\footnote{This paper allows $S_{n,i}\leq A_{n,i}$, which is  more general than the conventional assumption $S_{n,i}= A_{n,i}$  adopted in  related literature.} 
We consider the following class of \emph{synchronized} packet generation and arrival processes:

\begin{definition} \emph{Synchronized Packet Generations and Arrivals:}
The packet generation and arrival processes are said to be \emph{synchronized} across  the $N$ flows, if there exist two sequences $\{S_1, S_2,\ldots\}$ and $\{A_1,A_2,\ldots\}$ such that for all $i=1,2,\ldots,$ and $n=1,\ldots,N$
\begin{align}\label{eq_synchronized}
S_{n,i} = S_i,~A_{n,i} = A_i.
\end{align}
\end{definition}
We note that the sequences $\{S_1, S_2,\ldots\}$ and $\{A_1,A_2,\ldots\}$ in  \eqref{eq_synchronized} are \emph{arbitrary}. Hence,
\emph{out-of-order arrivals}, e.g., $S_i < S_{i+1}$ but $A_i > A_{i+1}$, are allowed. In the special case that the system has a single flow ($N=1$), the packet generation times $S_{n,1}$ and arrival times $A_{n,1}$ of this flow are {arbitrarily} given without any constraint. Age-optimal scheduling in this special case has been previously studied in \cite{Bedewy2016,BedewyJournal2017,Bedewy2017,BedewyMultihop2017}. 



Let $\pi$ represent a scheduling policy that determines how to assign  packets from the  queue to servers over time. Let $\Pi$ denote the set of all \emph{causal} scheduling policies in which the scheduling decisions are made based on the history and current states of the system.
A scheduling policy is said to be \emph{preemptive} if a busy server can stop the transmission of the current packet and start sending another packet at any time; the preempted packet is stored back to the queue, waiting to be sent at a later time.
A scheduling policy is said to be \emph{non-preemptive} if each server must complete the transmission of the current packet before initiating the service of another packet. A scheduling policy is said to be \emph{work-conserving} if all servers remain busy whenever the queue contains packets waiting to be processed.  We use $\Pi_{np}$ to denote the set of non-preemptive and causal scheduling policies, where $\Pi_{np}\subset \Pi$. 
Let 
\begin{align}
\mathcal{I}=\{S_{i}, A_{i},~ i=1,2,\ldots\} 
\end{align}
denote the synchronized packet generation and arrival times of the flows. 
We assume that the packet generation/arrival times $\mathcal{I}$, the packet transmission times, and the transmission errors are
governed  by three \emph{mutually independent}  stochastic processes, none of which are influenced by the  scheduling policy.

\subsection{Age  Metrics}

Among the packets that have been delivered to the destination  $d_n$ of flow $n$ by time $t$, the freshest packet was generated at time  
\begin{align}
U_{n} (t) =\max_i \{S_{n,i}: D_{n,i} \leq t\}.
\end{align}
 \emph{Age of information}, or simply \emph{age}, for flow $n$ is defined as \cite{Song1990,KaulYatesGruteser-Infocom2012}
\begin{align}\label{eq_age}
\Delta_{n} (t) = t - U_{n} (t) = t - \max_i \{S_{n,i}: D_{n,i} \leq t\},
\end{align}
which is the time difference between  the current time $t$ and the generation time $U_{n} (t)$ of the freshest packet currently available at destination $d_n$. Because $S_{n,i}\leq D_{n,i}$,  one can get  $\Delta_{n} (t)\geq 0$ for all flow $n$ and time $t$. Let $\bm{\Delta}(t)=(\Delta_{1} (t),\ldots,\Delta_{N} (t)) \in [0,\infty)^N$ be the age vector of the $N$ flows at time $t$.


We introduce an \emph{age penalty function} $p(\bm{\Delta}) = p\circ \bm{\Delta}$  to represent the level of dissatisfaction for having aged information at the $N$ destinations, where $p: [0,\infty)^N\rightarrow \mathbb{R}$ can be any  \emph{non-decreasing} function of the $N$-dimensional age vector  $\bm{\Delta}$. Some  examples of the age penalty function are: 
\begin{itemize}
\item[1.] The \emph{average age} of the $N$ flows is
\begin{align}\label{eq_avgage}
p_{\text{avg}} (\bm{\Delta}) = \frac{1}{N}\sum_{n=1}^N \Delta_{n}. 
\end{align}

\item[2.] The \emph{maximum age} of the $N$ flows is
\begin{align}
p_{\max} (\bm{\Delta}) = \max_{n=1,\ldots,N} \Delta_{n}.
\end{align}

\item[3.] The \emph{mean square age} of the $N$ flows is
\begin{align}
p_{\text{ms}} (\bm{\Delta}) = \frac{1}{N}\sum_{n=1}^N (\Delta_{n} )^2.
\end{align}

\item[4.] The \emph{$l$-norm of the age vector} of the $N$ flows is
\begin{align}
p_{\text{$l$-norm}} (\bm{\Delta}) = \left[\sum_{n=1}^N (\Delta_{n} )^l\right]^{\frac{1}{l}}, ~l\geq1.
\end{align}


\item[5.] The \emph{sum of per-flow age penalty functions} is
\begin{align}
p_{\text{sum-penalty}} (\bm{\Delta}) = \sum_{n=1}^N g(\Delta_{n}),
\end{align}
where $g: [0,\infty) \rightarrow \mathbb{R}$ is a \emph{non-decreasing} function. Practical applications of non-decreasing age functions can be found in \cite{SunNonlinear2019,yates2021AgeSurvey,shisher2021age,ShisherMobiHoc22,shisher2023learning}. 


\end{itemize}

In this paper, we consider a class of \emph{symmetric} and \emph{non-decreasing} age penalty functions, i.e.,
\begin{align}
\mathcal{P}_{\text{sym}}
\!=\!\{p: [0,\infty)^N\rightarrow \mathbb{R}  \text{ is symmetric and non-decreasing}\}.\nonumber
\end{align}
This is a fairly large class of age penalty functions, where the function $p$ can be discontinuous, non-convex, or non-separable.
It is easy to see 
\begin{align}
\{p_{\text{avg}},p_{\max}, p_{\text{ms}},p_{\text{$l$-norm}}, 
p_{\text{sum-penalty}}\}\subset \mathcal{P}_{\text{sym}}.
\end{align}

In this paper, we consider both continuous-time and discrete-time  status updating systems. In the continuous-time setting, time $t \in [0,\infty)$ can take any positive value and the packet transmission times are \emph{i.i.d.} continuous random variables. On the other hand, in the discrete-time setting, time is quantized into multiples of a fundamental time unit $T_s$, i.e., $t \in \{0, T_s, 2T_s, \ldots\}$, and each packet's transmission time is fixed and equal to $T_s$. Consequently, the variables $S_{n,i}, A_{n,i}, D_{n,i}, t, U_{n} (t), \Delta_{n} (t)$ are all multiples of $T_s$. In realistic discrete-time systems, service preemption is not allowed. 


Let ${\Delta}_{n,\pi}(t)$ denote the age of flow $n$ achieved by scheduling policy $\pi$ 
and $\bm{\Delta}_{\pi}(t) = (\Delta_{1,\pi} (t),\ldots,\Delta_{N,\pi} (t))$. 
In the continuous-time case, we assume that the initial age $\bm\Delta_{\pi}(0^-)$ at time $t=0^-$ remains the same for all scheduling policies $\pi\in\Pi$, where $t=0^-$ is the moment right before $t=0$. In the discrete-time case, we assume that the initial age $\bm\Delta_{\pi}(0)$ at time $t=0$ remains the same for all scheduling policies $\pi\in\Pi$. 

The results in this paper remain true even if the age penalty function $p_t$ varies over time $t$. For example, it is allowed  that $p_t = p_{\text{avg}}$ for $0\leq t \leq 100$ and $p_t = p_{\text{max}}$ for $100<t \leq 200$. 
In the continuous-time case, we use $\{p_t \circ \bm{\Delta}_\pi(t), t\in [0,\infty)\}$ to represent the age-penalty stochastic process formed by the \emph{time-dependent} penalty function $p_t$ of the age vector $\bm{\Delta}_{\pi}(t)$ under scheduling policy $\pi$. In the discrete-time case, the age-penalty stochastic process is denoted by $\{p_t \circ \bm{\Delta}_\pi(t), t=0,T_s,2 T_s,\ldots\}$. 

\section{Multi-flow Status Update Scheduling: \\ The Continuous-time Case}\label{sec_analysis}
In this section, we investigate multi-flow scheduling in continuous-time status updating systems. We first consider a system setting with multiple servers and exponential transmission times, where an age-optimal scheduling result is established. Next, we study a more general system setting with multiple servers and NBU transmission times. In the second setting, age optimality is inherently difficult to achieve and we present a near age-optimal scheduling result.

\ignore{
Maximum Age Difference first (MAP) 
Maximum Age Back-pressure first (MAR)
Maximum Age backPressure first (MAP)
}

\subsection{Multiple Flows, Multiple Servers, Exponential Service Times}
%

To address the multi-flow scheduling problem, we consider a flow selection discipline called \emph{Maximum Age First  (MAF)}  \cite{LiInfocom2015,IgorAllerton2016,HsuTWC2017}, in which 
\emph{the flow with the maximum age is served first, with ties broken arbitrarily}. 

For multi-flow single-server systems, a scheduling policy is defined by combining the Preemptive, MAF, and LGFS service disciplines as follows:


\ignore{
\begin{figure}
\centering
\includegraphics[width=0.2\textwidth]{./figs/policies}   
\caption{{\red Remove periodic arrivals.}
Packet service priorities of the MAF-LGFS policy for periodic arrivals with a period $T$, where  $\triangle$ denote delivered packets,  $\bigcirc$ denote undelivered packets waiting to be served,
and the numbers in the circles $\bigcirc$ represent the service priorities of the undelivered packets. 
The packets with priorities 1-3 are generated the last and  will be served first; further, the packets with priorities 1 and 2 are from the flows with the maximum age and hence will be served earlier than the packet with priority 3.}\vspace{-0.0cm}
\label{fig_policies}
\end{figure}  }
\ignore{
\begin{figure}
\centering
\includegraphics[width=0.25\textwidth]{./figs/policies1}   
\caption{
Packet service priorities of the MAF-LGFS policy for synchronized packet generations and arrivals, where  $\triangle$ denote delivered packets,  $\bigcirc$ denote undelivered packets waiting to be served, the packet generation times $S_{n,i}$ are marked on the left, and the numbers in the circles $\bigcirc$ represent the service priorities of the undelivered packets. 
In particular, packets with priorities 1-4 are generated the last and  will be served first; further, the service priorities of these 4 last generated packets are determined by using the maximum age (MA) first discipline. Notice that the service priorities are not determined by the packet arrival times $A_{n,i}$.}\vspace{-0.0cm}
\label{fig_policies}
\end{figure}  }

\begin{definition} \emph{Preemptive, Maximum Age First, Last Generated First Served (P-MAF-LGFS) policy:} This is a work-conserving scheduling policy for multiple-server, continuous-time systems with synchronized packet generations and arrivals. It operates as follows:

\begin{itemize}
\item[1.] If the queue is not empty, a server is assigned to process the most recently generated packet from the flow with the maximum age, with ties broken arbitrarily. 

\item[2.] The next server is assigned to process the most recently generated packet from the flow with the second maximum age, with ties broken arbitrarily. 

\item[3.] This process continues until either (i) the most recently generated packet of every flow is under service or has been delivered, or (ii) all servers are busy. 

\item[4.] If the most recently generated packet of every flow is under service or has been delivered, the remaining servers can be arbitrarily assigned to send the remaining packets in the queue, until the queue becomes empty.

\item[5.] When fresher packets arrive, the scheduler can preempt the packets that are currently under service and assign the new packets to servers following Steps 1-4 above. The preempted packets are then returned to the queue, where they await their turn to be transmitted at a later time.
\end{itemize}

\end{definition}

The following observation provides useful insights into the operations of the P-MAF-LGFS policy: Due to synchronized packet generations and arrivals,  when the most recently generated packet of flow $n$ is successfully delivered in the P-MAF-LGFS policy, flow $n$ must have the \emph{minimum} age among the $N$ flows. Conversely, if flow $n$ does not have the \emph{minimum} age among all the flows, its most recently generated packet must be undelivered. Hence, in the P-MAF-LGFS policy, the most recently generated packet from a flow that does not have the \emph{minimum} age  is always available to be scheduled. 

The above P-MAF-LGFS policy is suitable for use in both single-server and multiple-server systems. It extends the original single-server P-MAF-LGFS policy introduced in \cite{SunAoIWorkshop2018} to encompass the more general multi-server scenario. 

The age optimality of the P-MAF-LGFS policy is established in Theorem \ref{thm1} and Corollary \ref{coro1} below.

\begin{theorem}(Continuous-time, multiple flows, multiple servers, exponential transmission times with transmission errors)\label{thm1}
In continuous-time status updating systems, if (i) the transmission errors are \emph{i.i.d.} with an error probability $q \in [0, 1)$, (ii)  the packet generation and arrival times are {synchronized} across the $N$ flows, and (iii)  the packet transmission times are exponentially distributed and \emph{i.i.d.} across packets, then it holds that for all $\mathcal{I}$, all $p_t \in\mathcal{P}_{\text{sym}}$, and all $\pi\in\Pi$ 
\begin{align}\label{thm1eq1}
&[\{p_t \circ\bm{\Delta}_{\text{P-MAF-LGFS}}(t), t\in [0,\infty)\}\vert\mathcal{I}] \nonumber\\
\leq_{\text{st}} & [\{p_t \circ\bm{\Delta}_\pi(t), t\in [0,\infty)\}\vert\mathcal{I}],
\end{align}
or equivalently, for all $\mathcal{I}$, all $p_t \in\mathcal{P}_{\text{sym}}$, and all non-decreasing functional $\phi$
 \begin{align}\label{thm1eq2}
&\mathbb{E}\left[\phi (\{p_t \circ\bm{\Delta}_{\text{P-MAF-LGFS}}(t),t\in [0,\infty)\})\vert\mathcal{I}\right] \nonumber\\
= & \min_{\pi\in\Pi} \mathbb{E}\left[\phi (\{p_t \circ\bm{\Delta}_\pi(t),t\in [0,\infty)\})\vert\mathcal{I}\right],
\end{align}
provided that the expectations in \eqref{thm1eq2} exist.
\end{theorem}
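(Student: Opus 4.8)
The plan is to prove the process-level inequality \eqref{thm1eq1} by a sample-path coupling; since \eqref{thm1eq1} and \eqref{thm1eq2} are equivalent via the functional characterization \eqref{eq_order}, and since P-MAF-LGFS is itself a causal policy, this yields \eqref{thm1eq2} as well. Fix $\mathcal{I}$. Writing $U_{n,\pi}(t)$ for the policy-$\pi$ version of the quantity $U_{n}(t)$ in \eqref{eq_age}, so that $\Delta_{n,\pi}(t)=t-U_{n,\pi}(t)$, I would first reduce the claim to a purely deterministic dominance statement about \emph{sorted} vectors. Because every $p_t\in\mathcal{P}_{\text{sym}}$ is symmetric and non-decreasing, $p_t(\bm{\Delta}_\pi(t))$ depends only on the sorted age vector and is monotone in it coordinatewise; hence it suffices to construct, on a common probability space, versions of $\bm{\Delta}_{\text{P-MAF-LGFS}}$ and $\bm{\Delta}_\pi$ with the correct conditional laws given $\mathcal{I}$, under which almost surely
\[
U_{[k],\text{P-MAF-LGFS}}(t)\ \ge\ U_{[k],\pi}(t)\quad\text{for all }k=1,\dots,N\text{ and all }t\ge 0,
\]
where $U_{[k],\pi}(t)$ is the $k$-th largest entry of $(U_{1,\pi}(t),\dots,U_{N,\pi}(t))$. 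Given this, pathwise domination of $p_t\circ\bm{\Delta}_{\text{P-MAF-LGFS}}(t)$ by $p_t\circ\bm{\Delta}_\pi(t)$ for all $t$ follows for every $p_t\in\mathcal{P}_{\text{sym}}$ simultaneously, and then \eqref{eq_order} closes the argument.

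For the coupling I would build all service completions from server-indexed Poisson clocks: for each server $m$, take an independent rate-$\mu$ Poisson process with points $t^{(m)}_1<t^{(m)}_2<\cdots$ and i.i.d.\ Bernoulli$(q)$ marks $e^{(m)}_1,e^{(m)}_2,\dots$; whenever a policy has a packet occupying server $m$, that packet completes at the first clock point after it (re)starts service, the completion being erroneous iff the corresponding mark is $1$. By the memoryless property of the exponential distribution, under \emph{any} causal policy---\emph{including preemptive ones}---the transmission durations produced this way are i.i.d.\ exponential$(\mu)$ with i.i.d.\ error probability $q$, so this is a legitimate simultaneous coupling of P-MAF-LGFS and of $\pi$. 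This is precisely where hypothesis (iii) is used: memorylessness makes preemption ``free,'' so a restarted transmission is stochastically indistinguishable from a fresh one. Before the main comparison I would also invoke a standard dominance lemma: on the same space, any causal $\pi$ can be replaced by a causal work-conserving policy $\pi'$ that always serves the most recently generated available packet of some non-freshest flow, with $U_{[k],\pi'}(t)\ge U_{[k],\pi}(t)$ for all $k,t$ (idling, or transmitting staler packets, can only hurt); thus it is enough to compare P-MAF-LGFS against such a $\pi'$.

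The domination would then be proved by induction over the ordered sequence of \emph{events}: the packet arrivals in $\mathcal{I}$ and the completion ticks $t^{(m)}_\ell$. The induction hypothesis bundles two invariants: (a) $U_{[k],\text{P-MAF-LGFS}}(t)\ge U_{[k],\pi'}(t)$ for all $k$; and (b) the structural invariant for P-MAF-LGFS implied by synchronization and the MAF rule---at every time the packets in service under P-MAF-LGFS all carry the current freshest generation time $g^\ast(t):=\max\{S_i:A_i\le t\}$ and occupy the largest-age flows, and, as observed right after the policy definition, a flow whose freshest packet has just been delivered becomes a minimum-age flow. An arrival event changes only $g^\ast$ and leaves both $U$-vectors untouched, so (a)--(b) persist. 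At a completion tick on server $m$: an erroneous tick changes no $U$-vector; a successful tick under P-MAF-LGFS delivers $g^\ast(t)$ to the stalest served flow---replacing the smallest $U$-entry among P-MAF-LGFS's served flows by $g^\ast(t)$---whereas under $\pi'$ the delivered generation time is at most $g^\ast(t)$. A short case analysis on which coordinate of the sorted $U$-vector each policy raises, and on whether $\pi'$'s server $m$ is idle, is preempting, or carries a staler packet, shows that the coordinate raised by P-MAF-LGFS dominates the one raised by $\pi'$ for every $k$ at once, reproducing (a); and (b) is reproduced because after the delivery the served flow has minimum age, which is exactly the MAF configuration.

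I expect the main obstacle to be exactly this completion-tick step: one must show that the single coordinate of $U_{[\cdot],\text{P-MAF-LGFS}}$ that improves dominates the coordinate of $U_{[\cdot],\pi'}$ that improves \emph{for all $k$ simultaneously}---a majorization-flavored argument---while the busy/idle patterns of the two policies across the $M$ servers may differ, which forces a careful pairing of P-MAF-LGFS's servers with those of $\pi'$ in the coupling and forces the auxiliary invariant (b) to be chosen just strong enough to be self-reproducing under every event type. It is precisely this step that breaks for NBU transmission times, where a restarted service is no longer a fresh service, which is why Theorem~\ref{thm3} only asserts near-optimality.
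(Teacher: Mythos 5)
Your overall strategy matches the paper's: couple the two policies on a common probability space, prove sorted-age (equivalently sorted $U$) dominance pathwise by induction over events, then invoke symmetry/monotonicity of $p_t$ and the characterization \eqref{eq_order}, with a separate reduction from non-work-conserving to work-conserving policies. However, the step you yourself flag as the main obstacle is exactly where your construction breaks, and the paper's key device for fixing it is missing from your proposal. With per-physical-server Poisson clocks, the completion on server $m$ is forced to occur simultaneously in both policies, but the two policies may be serving flows of very different age \emph{ranks} on that server, and then the single-event sorted-dominance step fails. Concretely, take $N=3$, $M=2$, both policies with ages $(10,1,0.5)$ at $t^-$ and $t-W(t)=0.2$: P-MAF-LGFS serves its two max-aged flows, with the age-$1$ flow on server $B$, while the comparison policy happens to serve its age-$10$ flow's freshest packet on server $B$. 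A successful tick of server $B$ gives sorted ages $(10,0.5,0.2)$ under P-MAF-LGFS and $(1,0.5,0.2)$ under the other policy, so $\Delta_{[1],P}\le\Delta_{[1],\pi}$ is violated even though dominance (with equality) held before the tick. Thus sorted dominance alone is not a self-reproducing invariant under a server-indexed coupling, and your invariant (b) does not repair this, since it constrains only P-MAF-LGFS's configuration, not which rank $\pi$ is serving on the ticking server.

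The paper resolves precisely this by building the coupling at the level of age ranks rather than physical servers: its Coupling Lemma (Lemma \ref{coupling}) constructs $P_1,\pi_1$ so that a successful delivery from the flow with the $i$-th largest age in $P_1$ occurs if and only if a successful delivery from the flow with the $i$-th largest age in $\pi_1$ occurs at the same instant; this is legitimate because all busy servers carry i.i.d.\ memoryless (exponential) residual times and i.i.d.\ error marks, so completion events are exchangeable across busy servers and can be relabeled by rank. With that rank-matched coupling, the inductive step (the paper's Lemma \ref{lem2}) is a clean case analysis using $\Delta'_{[i],\pi_1}=\Delta_{[i],\pi_1}$ for $i<j$, $\Delta'_{[i],\pi_1}\ge\Delta_{[i+1],\pi_1}$ for $i\ge j$, and $\Delta'_{[N],P_1}=t-W(t)$, and it works for an \emph{arbitrary} work-conserving $\pi$, so your extra (and itself unproven) reduction to a policy $\pi'$ that always serves the freshest available packet of a non-freshest flow is unnecessary. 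In short, the "careful pairing of servers" you defer is not a technicality but the central idea of the paper's proof; without supplying it (or an equivalent strengthening of the induction hypothesis), the argument as proposed does not go through for $M\ge 2$.
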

\begin{proof}

See Appendix \ref{app1}.  \end{proof}

According to Theorem \ref{thm1}, for any age penalty function in $\mathcal{P}_{\text{sym}}$, any number of flows  $N$, any number of servers $M$,  any synchronized packet generation and arrival times in $\mathcal{I}$, and regardless the presence of \emph{i.i.d.} transmission errors or not,
the P-MAF-LGFS policy minimizes the stochastic process $[\{p_t \circ\bm{\Delta}_\pi(t), t\in [0,\infty)\}|\mathcal{I}]$ among all causal policies  in terms of stochastic ordering. Theorem \ref{thm1} is more general than \cite[Theorem 1]{SunAoIWorkshop2018}, as the latter was established for the special case of single-server systems without transmission errors.

By considering a mixture over the different realizations of $\mathcal{I}$, it can be readily deduced from Theorem \ref{thm1} that 
\begin{corollary}\label{coro1}
Under the conditions of Theorem \ref{thm1}, it holds that for all $p_t \in\mathcal{P}_{\text{sym}}$ and all $\pi\in\Pi$ 
\begin{align}\label{coro1eq1}
\{p_t \circ\bm{\Delta}_{\text{P-MAF-LGFS}}(t), t\in [0,\infty)\} \!\leq_{\text{st}} \!\{p_t \circ\bm{\Delta}_\pi(t), t\in [0,\infty)\}, 
\end{align}
or equivalently, for all $p_t \in\mathcal{P}_{\text{sym}}$ and all non-decreasing functional $\phi$
 \begin{align}\label{coro1eq2}
&\mathbb{E}\left[\phi (\{p_t \circ\bm{\Delta}_{\text{P-MAF-LGFS}}(t),t\in [0,\infty)\})\right] \nonumber\\
= & \min_{\pi\in\Pi} \mathbb{E}\left[\phi (\{p_t \circ\bm{\Delta}_\pi(t),t\in [0,\infty)\})\right],
\end{align}
provided that the expectations in \eqref{coro1eq2} exist.
\end{corollary}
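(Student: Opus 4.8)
The natural approach is the coupling / sample-path method used in the LGFS line of work (Bedewy et al., Sun et al.). The plan is to fix the realization of the packet generation/arrival times $\mathcal{I}$ and couple the P-MAF-LGFS system with an arbitrary causal policy $\pi$ on the same realizations of the transmission times and the i.i.d.\ errors. The core is to show that, under an appropriate coupling, the age vector satisfies, in the sense of the symmetric componentwise ordering,
\begin{align}
\Delta_{[k],\text{P-MAF-LGFS}}(t) \;\leq\; \Delta_{[k],\pi}(t),\qquad k=1,\ldots,N,\ t\geq 0,\label{eq:plan_sortedorder}
\end{align}
i.e., the $k$-th largest age under P-MAF-LGFS is no larger than the $k$-th largest age under $\pi$ for every $k$ and every $t$. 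Since every $p_t\in\mathcal{P}_{\text{sym}}$ is symmetric and non-decreasing, \eqref{eq:plan_sortedorder} immediately gives $p_t\circ\bm{\Delta}_{\text{P-MAF-LGFS}}(t)\leq p_t\circ\bm{\Delta}_\pi(t)$ pathwise; then taking the monotone functional $\phi$ and using the characterization \eqref{eq_order} of $\leq_{\text{st}}$ for stochastic processes yields \eqref{thm1eq1}–\eqref{thm1eq2}.

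The heart of the argument is establishing \eqref{eq:plan_sortedorder}. I would introduce, for each policy, an auxiliary "generation-time" vector $\bm{V}_\pi(t)=(V_{1,\pi}(t),\ldots,V_{N,\pi}(t))$ where $V_{n,\pi}(t)=U_{n,\pi}(t)$ is the generation time of the freshest delivered packet of flow $n$; note $\Delta_{n,\pi}(t)=t-V_{n,\pi}(t)$, so \eqref{eq:plan_sortedorder} is equivalent to $V_{[k],\text{P-MAF-LGFS}}(t)\geq V_{[N-k+1],\pi}(t)$ after reindexing — i.e., the sorted vector of freshest-delivered generation times under P-MAF-LGFS dominates that under $\pi$ coordinatewise (sorted order). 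The proof is by induction on the event epochs (arrivals, transmission completions with success/failure). The key structural facts to exploit are: (1) by the synchronization assumption, at any time all flows share the same set of available generation times $\{S_i\}$, so "the most recently generated packet of a flow" is the same across flows whenever not yet delivered; (2) the observation highlighted in the text — in P-MAF-LGFS, a flow whose freshest packet is delivered has the minimum age, so the policy always keeps the $M$ "most useful" fresh packets in service; (3) memorylessness of the exponential transmission times lets one re-couple which server finishes next without affecting the joint law, so one may assume the set of completion times is policy-independent and, crucially, that a completion in the P-MAF-LGFS system can be matched to a completion in the $\pi$ system. I would formalize a dominance relation between the two systems (a "$\pi$ is dominated by P-MAF-LGFS" predicate on states, encoding the sorted-vector domination of delivered freshest generation times plus a bookkeeping condition on which fresh packets are in service) and show it is preserved under every event type.

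For the inductive step one checks each event: (a) a synchronized arrival at time $A_i$ with generation time $S_i$ adds the same packet to every flow's candidate set in both systems, preserving domination trivially; (b) a successful completion in the P-MAF-LGFS system delivers the freshest packet of the maximum-age flow, which can only increase a coordinate of its sorted generation-time vector, and by the induction hypothesis plus the work-conserving/MAF priority the corresponding coordinate stays above $\pi$'s; (c) a completion in the $\pi$ system — whatever packet it delivers — cannot push any sorted coordinate of $\bm{V}_\pi$ above the matched coordinate of $\bm{V}_{\text{P-MAF-LGFS}}$, because P-MAF-LGFS was already "ahead" and was devoting all $M$ servers to the freshest packets of the currently-behind flows; (d) a transmission error returns the packet to the queue in both systems and P-MAF-LGFS immediately re-optimizes, so no coordinate is lost. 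The exponential assumption is used exactly in (b)–(c) to justify that the next completion epoch and the identity of the finishing server can be coupled to match across the two systems (the residual service time is again exponential regardless of preemption), which is what makes the Preemptive-LGFS discipline lose nothing.

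\textbf{Deriving Corollary~1.} Given Theorem~1, Corollary~1 is the routine de-conditioning step. For any non-decreasing functional $\phi$ and any $p_t\in\mathcal{P}_{\text{sym}}$, inequality \eqref{thm1eq1} together with the equivalence \eqref{eq_order} gives
\begin{align}
\mathbb{E}\!\left[\phi\big(\{p_t\circ\bm{\Delta}_{\text{P-MAF-LGFS}}(t),t\geq 0\}\big)\,\big|\,\mathcal{I}\right]\;\leq\;\mathbb{E}\!\left[\phi\big(\{p_t\circ\bm{\Delta}_\pi(t),t\geq 0\}\big)\,\big|\,\mathcal{I}\right]\label{eq:plan_condbound}
\end{align}
for every realization of $\mathcal{I}$. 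Since the generation/arrival process $\mathcal{I}$, the transmission times, and the errors are mutually independent and none is influenced by the scheduling policy, the conditional law of the age-penalty process given $\mathcal{I}$ is well defined for every policy; taking expectation over $\mathcal{I}$ in \eqref{eq:plan_condbound} and applying the tower property gives
\begin{align}
\mathbb{E}\!\left[\phi\big(\{p_t\circ\bm{\Delta}_{\text{P-MAF-LGFS}}(t),t\geq 0\}\big)\right]\;\leq\;\mathbb{E}\!\left[\phi\big(\{p_t\circ\bm{\Delta}_\pi(t),t\geq 0\}\big)\right].
\end{align}
As this holds for all non-decreasing $\phi$, the equivalence \eqref{eq_order} applied in the reverse direction yields the unconditional stochastic ordering \eqref{coro1eq1}, and minimizing over $\pi\in\Pi$ gives \eqref{coro1eq2}.

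\textbf{Main obstacle.} The routine parts are the reduction via symmetry+monotonicity and the de-conditioning for the corollary; the real work is step (c) of the induction — showing that \emph{no} completion under the arbitrary policy $\pi$, delivering a possibly stale packet of an arbitrary flow, can overtake P-MAF-LGFS in the sorted generation-time vector. The difficulty is that $\pi$ may have several servers finishing in a different order and may deliver packets out of age-order, so one must carefully choose the coupling of completion epochs (using exponential memorylessness) and define the dominance predicate with enough slack (tracking not just delivered freshest times but also which fresh packets are currently in service and on how many servers) so that it is genuinely invariant. Getting this bookkeeping right for the multi-server case — where P-MAF-LGFS assigns servers to the top-$M$ maximum-age flows and the synchronization assumption guarantees those fresh packets are available — is the crux, and it is the step that genuinely uses all three hypotheses (synchronization, exponential service, i.i.d.\ errors).
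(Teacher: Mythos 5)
Your derivation of Corollary \ref{coro1} from Theorem \ref{thm1} — conditioning on $\mathcal{I}$, using the characterization \eqref{eq_order} to pass to expectations of non-decreasing functionals, averaging over $\mathcal{I}$ by the tower property, and applying \eqref{eq_order} in reverse (with equality in \eqref{coro1eq2} since P-MAF-LGFS $\in\Pi$) — is correct and is exactly the "mixture over the realizations of $\mathcal{I}$" argument the paper uses. The lengthy sketch of Theorem \ref{thm1} itself is not needed for this statement, since the corollary takes Theorem \ref{thm1} as given.
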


Corollary \ref{coro1} states that the P-MAF-LGFS policy minimizes the stochastic process $\{p_t \circ\bm{\Delta}_\pi(t), t\in [0,\infty)\}$ in a stochastic ordering sense, outperforming all other causal policies.

\subsubsection{Status Update Scheduling with Packet Replications}

As discussed in Section \ref{sec:queuemodel}, our study has been centered on a scenario where different servers are not allowed to simultaneously transmit packets from the same flow. In this context, we have demonstrated the age-optimality of the P-MAF-LGFS policy in Theorem \ref{thm1}. However, in situations where multiple servers can transmit packets from the same flow, and packet replication is permitted, it becomes possible to create multiple copies of the same packet and transmit these copies concurrently across multiple servers. The packet is considered delivered once any one of these copies is successfully delivered; at that point, the other copies are canceled to release the servers. If the packet service times follow an \emph{i.i.d.} exponential distribution with a service rate of $\mu$, the $N$ servers can be equivalently viewed as a single, faster server with exponential service times and a higher service rate of $N\mu$. Additionally, this fast server exhibits \emph{i.i.d.} transmission errors with an error probability $q$. Our study also addresses this scenario. 

\begin{definition} \emph{Preemptive, Maximum Age First, Last Generated First Served policy with packet Replications  (P-MAF-LGFS-R):} In this policy, the last generated packet from the flow with the maximum age is served the first among all packets of all flows, with ties broken arbitrarily. This packet is replicated into $N$ copies, which are transmitted concurrently over the $N$ servers. The packet is considered delivered once any one of these $N$ copies is successfully delivered; at that point, the other copies are canceled to release the servers.
\end{definition}

By applying Theorem \ref{thm1} to this particular scenario with a single, faster server, we derive the following corollary.

\begin{corollary}\label{corollary_new}
Under the conditions of  Theorem \ref{thm1}, if packet replication is allowed, then it holds that for all $\mathcal{I}$, all $p_t \in\mathcal{P}_{\text{sym}}$, and all $\pi\in\Pi$ 
\begin{align}\label{corollary_neweq1}
&[\{p_t \circ\bm{\Delta}_{\text{P-MAF-LGFS-R}}(t), t\in [0,\infty)\}\vert\mathcal{I}] \nonumber\\
\leq_{\text{st}} & [\{p_t \circ\bm{\Delta}_\pi(t), t\in [0,\infty)\}\vert\mathcal{I}],
\end{align}
or equivalently, for all $\mathcal{I}$, all $p_t \in\mathcal{P}_{\text{sym}}$, and all non-decreasing functional $\phi$
 \begin{align}\label{corollary_neweq2}
&\mathbb{E}\left[\phi (\{p_t \circ\bm{\Delta}_{\text{P-MAF-LGFS-R}}(t),t\in [0,\infty)\})\vert\mathcal{I}\right] \nonumber\\
= & \min_{\pi\in\Pi} \mathbb{E}\left[\phi (\{p_t \circ\bm{\Delta}_\pi(t),t\in [0,\infty)\})\vert\mathcal{I}\right],
\end{align}
provided that the expectations in \eqref{corollary_neweq2} exist.
\end{corollary}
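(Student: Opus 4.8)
The plan is to identify the $N$-server system with packet replication with a single-server, continuous-time system whose lone server has exponential transmission times of rate $N\mu$ and is subject to the same i.i.d.\ error probability $q$, and then to invoke Theorem~\ref{thm1} in the special case $M=1$. When all $N$ servers are assigned to copies of one and the same packet, the memoryless property of the exponential distribution and the fact that a superposition of $N$ independent Poisson$(\mu)$ processes is Poisson$(N\mu)$ imply that the aggregate completion epochs form a Poisson process of rate $N\mu$, each of whose points is an i.i.d.\ successful delivery with probability $1-q$ and an i.i.d.\ error with probability $q$ (after which the errored copy is restarted). This is exactly the completion behavior of a single rate-$N\mu$ server with error probability $q$. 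Under this identification, a policy that at every instant devotes all $N$ servers to one common packet (call it a \emph{full-replication} policy) is precisely a causal policy of the single fast-server system, and the P-MAF-LGFS-R policy coincides, sample path by sample path, with the P-MAF-LGFS policy of that system (``serve the last-generated packet of the flow with maximum age, preempting on the arrival of a fresher packet'' is the same prescription in both descriptions). Hence Theorem~\ref{thm1} with $M=1$ gives that, conditioned on $\mathcal{I}$ and for every $p_t\in\mathcal{P}_{\text{sym}}$, the process $\{p_t\circ\bm{\Delta}_{\text{P-MAF-LGFS-R}}(t),t\in[0,\infty)\}$ is $\leq_{\text{st}}$ the corresponding process of every full-replication policy.

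It remains to rule out a gain from not fully replicating, i.e., to show that for every $\pi\in\Pi$ of the replication system there is a full-replication policy $\sigma$ with $\{\bm{\Delta}_{\sigma}(t),t\in[0,\infty)\}\leq_{\text{st}}\{\bm{\Delta}_{\pi}(t),t\in[0,\infty)\}$ conditioned on $\mathcal{I}$, in the vector/process sense of Definitions~\ref{def_vector}--\ref{def_process}. Granting this, because every $p_t\in\mathcal{P}_{\text{sym}}$ is non-decreasing, applying the maps $p_t$ pointwise preserves the order, so $\{p_t\circ\bm{\Delta}_{\sigma}(t),t\in[0,\infty)\}\leq_{\text{st}}\{p_t\circ\bm{\Delta}_{\pi}(t),t\in[0,\infty)\}$; combining this with the first step and the transitivity of $\leq_{\text{st}}$ yields \eqref{corollary_neweq1}, and the equivalent form \eqref{corollary_neweq2} then follows from \eqref{eq_order} applied to the process in \eqref{corollary_neweq1} together with the non-decreasing functional $\phi$. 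The conditioning on $\mathcal{I}$ is legitimate throughout because $\mathcal{I}$, the transmission times, and the transmission errors are mutually independent.

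The crux is this domination claim, whose intuitive content is that, under synchronized generations and arrivals, concentrating all server capacity on a single packet is never worse --- for any symmetric, non-decreasing $p_t$ --- than spreading it over distinct packets or idling. I would prove it by a sample-path coupling in the style of the proof of Theorem~\ref{thm1} in Appendix~\ref{app1}: couple the exponential clocks so that whenever $\pi$ has $k\leq N$ busy servers --- hence completion rate $k\mu\leq N\mu$ --- the rate-$N\mu$ server of $\sigma$ produces a completion no later than $\pi$'s next one; let $\sigma$ imitate $\pi$'s delivery decisions while maintaining the invariant that the set of (flow, generation-time) pairs already delivered under $\sigma$ dominates that under $\pi$; and propagate the invariant across packet arrivals, completions, and preemptions using work-conservation. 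In fact, since this coupling is exactly the machinery already developed in Appendix~\ref{app1}, the most economical route is to rerun that proof verbatim in the replication system, with P-MAF-LGFS-R replacing P-MAF-LGFS and the rate $N\mu$ replacing $\mu$, which establishes \eqref{corollary_neweq1} in one stroke. I expect this coupling --- in particular, keeping the invariant alive through arrivals that coincide with completions and through the preemptions triggered by fresher packets --- to be the only real work; the passage from the age-vector ordering to the $p_t\circ\bm{\Delta}$ ordering and then to the functional form is routine given Definitions~\ref{def_vector}--\ref{def_process} and \eqref{eq_order}, and the case $N=1$ is immediate since it is simply Theorem~\ref{thm1} with a single server.
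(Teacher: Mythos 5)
Your core step is exactly the paper's proof: the paper establishes Corollary \ref{corollary_new} solely by the observation (made in the paragraph preceding it) that under full replication the servers act as a single exponential server of rate $N\mu$ with the same error probability $q$, so that P-MAF-LGFS-R is P-MAF-LGFS for that fast server and Theorem \ref{thm1} applies; no further argument is given. Where you differ is that you explicitly flag, and then close, the point the paper leaves implicit: if $\Pi$ is read as all causal policies of the replication-permitted system, then a policy that spreads the servers over distinct packets is not a policy of the equivalent single-fast-server system, so the bare reduction does not literally cover it. Your final route---rerunning the Appendix \ref{app1} sample-path argument in the replication system with P-MAF-LGFS-R in place of P-MAF-LGFS---does work, and in fact becomes simpler there: any work-conserving policy keeps all servers busy while the queue is nonempty, so the coupled completion epochs and success/error marks of Lemma \ref{coupling} are unchanged, and because P-MAF-LGFS-R always delivers the freshest packet of the flow with the \emph{maximum} age, the index $j$ in the inductive comparison of Lemma \ref{lem2} is always $1$, so the ordering of the sorted age vectors is preserved no matter which packet the compared policy delivers (the rank-matching part of the coupling is not even needed); non-work-conserving policies are then dispatched as in the proof of Theorem \ref{thm1}. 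So your write-up is, if anything, more complete than the paper's one-line proof, at the cost of redoing the coupling rather than citing Theorem \ref{thm1} as a black box; the intermediate ``domination by a full-replication policy'' detour you sketch is unnecessary once you take that direct route, and is the one piece of your plan that would require genuinely new (and more delicate) work if pursued literally.
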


\ignore{

The following theorem establishes the age-optimality of the P-MAF-LGFS policy in the presence of \emph{i.i.d.} transmission errors.

\begin{theorem}(Continuous-time, transmission errors, multiple flows, single server, exponential transmission times)\label{coro2}
If there are \emph{i.i.d.} transmission errors with an error probability $q\in (0,1)$ and the conditions (ii)-(iv) of Theorem \ref{thm1} holds, then the results of Theorem \ref{thm1} and Corollary \ref{coro1} remain true.  
\end{theorem}
\begin{proof}
In order to prove Theorem \ref{coro2}, we modify the sample-path argument in the proof of Theorem \ref{coro1} by adopting a new coupling lemma that can handle transmission errors. See Appendix \ref{appcoro2} for the details. 
\end{proof}
}

\subsection{Multiple Flows, Multiple Servers, NBU Service Times}
Next, we consider a more general system setting with multiple servers and a class of New-Better-than-Used (NBU)  transmission time distributions that include exponential distribution as a special case. 

\begin{definition}  \emph{New-Better-than-Used Distributions:} Consider a non-negative random variable $X$ with complementary cumulative distribution function (CCDF) $\bar{F}(x)=\Pr[X>x]$. Then, $X$ is said to be \emph{New-Better-than-Used (NBU)} if for all $t,\tau\geq0$
\begin{equation}\label{NBU_Inequality}
\bar{F}(\tau +t)\leq \bar{F}(\tau)\bar{F}(t).
\end{equation} 
Examples of NBU distributions include deterministic distribution, exponential distribution, shifted exponential distribution, geometric distribution, gamma distribution, and negative binomial distribution. 
\end{definition}

In the scheduling literature, optimal  scheduling results were successfully established for delay minimization in single-server queueing systems, e.g., \cite{Schrage68,Jackson55}, but it can become inherently difficult in the multi-server cases. In particular, minimizing the average delay in deterministic scheduling problems with more than one servers is NP-hard  \cite{Leonardi:1997}. Similarly, delay-optimal stochastic scheduling in multi-class, multi-server queueing systems is deemed to be quite difficult \cite{Weiss:1992,Weiss:1995,Dacre1999}. The key challenge in multi-class, multi-server scheduling is that \emph{one cannot combine the capacities of all the servers to jointly process the most important packet}. Due to the same reason, age-optimal scheduling in multi-flow, multi-server systems is quite challenging. In the sequel, we consider a  relaxed goal to seek for \emph{near} age-optimal scheduling of multiple information flows, where our proposed scheduling policy is shown to be within a small additive gap from the optimum age performance.   


\begin{figure}
\centering 
\includegraphics[width=0.3\textwidth]{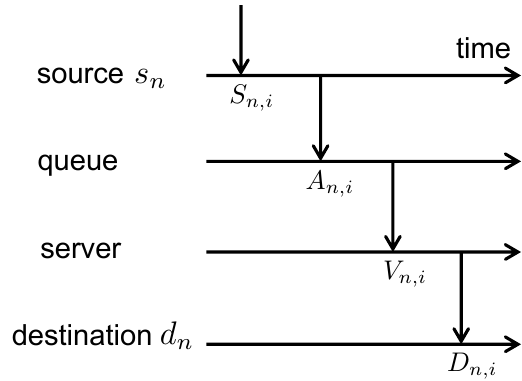} \caption{An illustration of $S_{n,i}$, $A_{n,i}$, $V_{n,i}$, and $D_{n,i}$.}
\label{fig_times1} 
\vspace{-5mm}
\end{figure}

To establish near age optimality, we introduce another age metric named \emph{age of served information}, denoted as $\Xi_{n} (t)$, which is a lower bound for age of information $\age_{n}(t)$: 

Let $V_{n,i}$ be the time that the $i$-th packet of flow $n$ starts its service by a server, i.e., the service starting time of the $i$-th packet of flow $n$. It holds that $S_{n,i}\leq A_{n,i}\leq V_{n,i}\leq D_{n,i}$, as illustrated in Fig. \ref{fig_times1}.
\emph{Age of served information} for flow $n$ is defined as
\begin{align}\label{eq_age_served}
\Xi_{n} (t) = t - \max_i\{S_{n,i}: V_{n,i} \leq t\},
\end{align}
which is the time difference between the current time $t$ and the generation time of the freshest packet that has started service by time $t$. Let $\bm{\Xi}(t)=(\Xi_{1} (t),\ldots,\Xi_{N} (t))$ be the age of served information vector at time $t$. Age of served information $\Xi_{n} (t)$ reflects the staleness of the packets that has begun service, whereas $\age_{n}(t)$ represents the staleness of the packets that has been successfully delivered to their destination. As depicted in Fig. \ref{fig_times2}, it is evident that $\Xi_{n} (t)\leq \age_{n}(t)$. In non-preemptive policies, the discrepancy between $\Xi_{n} (t)$ and $\age_{n}(t)$ solely arises from the \emph{i.i.d.} packet transmission times. 
Consequently, the age of served information $\Xi_{n} (t)$ closely approximates the age $\age_{n}(t)$.

We propose a new flow selection discipline called \emph{Maximum Age of Served Information First (MASIF)}, in which 
\emph{the flow with the maximum Age of Served Information is served first, with ties broken arbitrarily}. Using this discipline, we define another scheduling policy:

\begin{definition} \emph{Non-Preemptive, Maximum Age of Served Information first, Last Generated First Served (NP-MASIF-LGFS) policy:} This is a non-preemptive, work-conserving scheduling policy for multi-server systems. It operates as follows:
\begin{itemize}
\item[1.] When the queue is not empty and there are idle servers, an idle server is assigned to process the most recently generated packet from the flow with the maximum age of served information, with ties broken arbitrarily. 

\item[2.] After a packet from flow $n$ is assigned to an idle server, the server transitions into a busy state and will complete the transmission of the current packet from flow $n$ before serving any other packet. The age of served information $\Xi_{n} (t)$ of flow $n$ decreases. As a result, flow $n$ may no longer retain the maximum age of served information, allowing the remaining idle servers to be allocated to process other flows. The next idle server is assigned to process the most recently generated packet from the flow with the maximum age of served information, with ties broken arbitrarily. 

\item[3.] This procedure continues until either all servers are busy or the queue becomes empty. 

\end{itemize}

\end{definition}


Next, we will establish the near-age optimality of the NP-MASIF-LGFS policy. 
The following theorem shows that the age of served information obtained by the NP-MASIF-LGFS policy serves as a lower bound (in terms of stochastic ordering) for the age of all other non-preemptive and causal policies. 

\begin{figure}
\centering 
\includegraphics[width=0.3\textwidth]{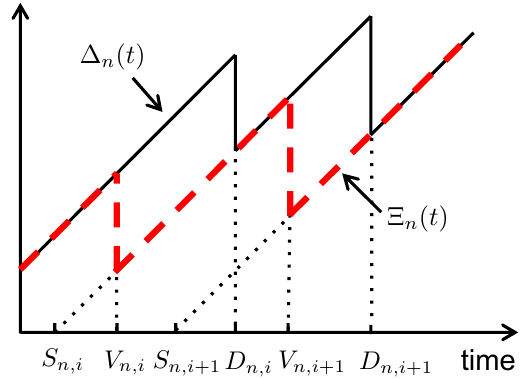} \caption{The age of served information $\Xi_{n} (t)$ as a lower bound of  age $\age_{n}(t)$.}
\vspace{-5mm}
\label{fig_times2} 
\end{figure} 

\begin{theorem}(Continuous-time, multiple flows, multiple servers, NBU transmission times with no errors) \label{thm3}
In continuous-time status updating systems, if  (i) there is no  transmission errors (i.e., $q=0$),  (ii) the packet generation and arrival times are {synchronized} across the $N$ flows,  and (iii) the packet transmission times are NBU and \emph{i.i.d.} across both servers and packets, then it holds that for all $\mathcal{I}$, all $p_t \in\mathcal{P}_{\text{sym}}$, and all $\pi\in\Pi_{np}$\footnote{Recall that $\Pi_{np}$ is the set of non-preemptive and causal scheduling policies.} 
\begin{align}\label{thm3eq1}
&[\{p_t \circ\bm{\Xi}_{\text{NP-MASIF-LGFS}}(t), t\in [0,\infty)\}\vert\mathcal{I}] \nonumber\\
\leq_{\text{st}} & [\{p_t \circ\bm{\Delta}_\pi(t), t\in [0,\infty)\}\vert\mathcal{I}],
\end{align}
or equivalently, for all $\mathcal{I}$, all $p_t \in\mathcal{P}_{\text{sym}}$, and all non-decreasing functional $\phi$
 \begin{align}\label{thm3eq2}
&\mathbb{E}\left[\phi (\{p_t \circ\bm{\Xi}_{\text{NP-MASIF-LGFS}}(t),t\in [0,\infty)\})\vert\mathcal{I}\right] \nonumber\\
\leq & \min_{\pi\in\Pi_{np}} \mathbb{E}\left[\phi (\{p_t \circ\bm{\Delta}_\pi(t),t\in [0,\infty)\})\vert\mathcal{I}\right] \nonumber\\
\leq & \mathbb{E}\left[\phi (\{p_t \circ\bm{\age}_{\text{NP-MASIF-LGFS}}(t),t\in [0,\infty)\})\vert\mathcal{I}\right],
\end{align}
provided that the expectations in \eqref{thm3eq2} exist.
\end{theorem}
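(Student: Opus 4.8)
The plan is to adapt the sample-path coupling machinery used for Theorem~\ref{thm1} to the non-preemptive, NBU setting, with two important modifications: (i) the ordering argument is carried out on the \emph{age of served information} process $\bm\Xi$ of the NP-MASIF-LGFS policy rather than on its delivered-age process, and (ii) the role of the memoryless property of exponential service times is replaced by the NBU inequality \eqref{NBU_Inequality} together with a coupling of \emph{service starting times} rather than departure times. First I would fix an arbitrary $\mathcal{I}$ and an arbitrary reference policy $\pi\in\Pi_{np}$, and construct a coupled probability space on which NP-MASIF-LGFS and $\pi$ experience the same realization of $\mathcal{I}$. The key observation, already highlighted in the text after the P-MAF-LGFS definition, is that under synchronized arrivals the freshest packet of any flow that does not currently hold the minimum served-age is always available in the queue; hence NP-MASIF-LGFS, whenever a server becomes idle, can always start serving the last-generated packet of the flow with the largest $\Xi_n(t)$. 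This makes the served-age vector $\bm\Xi_{\text{NP-MASIF-LGFS}}(t)$ behave, at service-start epochs, exactly like the delivered-age vector of a greedy MAF policy in a system whose ``service'' is instantaneous at the start of transmission.

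The main technical step is the analogue of the coupling/interchange lemma behind Theorem~\ref{thm1}: I would show that along the coupled sample path there is a one-to-one correspondence between the service-start events of NP-MASIF-LGFS and the delivery events of $\pi$ such that (a) the $k$-th service-start of NP-MASIF-LGFS occurs no later than the $k$-th delivery of $\pi$, and (b) the generation time of the packet served in the $k$-th service-start of NP-MASIF-LGFS is no smaller than that of the $k$-th delivery of $\pi$. Step (a) is where the NBU assumption enters: because NP-MASIF-LGFS is work-conserving and non-preemptive, each idle server instantly picks up a fresh packet, and an NBU service time started ``fresh'' is stochastically no larger than a residual service time of a partially completed job that $\pi$ might be waiting on; a standard stochastic-coupling argument (as in \cite{Bedewy2016,BedewyJournal2017}) then lets me pair the completion instants so that NP-MASIF-LGFS's started-jobs stay ahead. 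Step (b) follows from the MASIF priority rule exactly as LGFS priority yields the freshness comparison in Theorem~\ref{thm1}: at every decision epoch NP-MASIF-LGFS is serving the freshest available packet of the most-stale flow, so by induction on events its served-information content dominates that of $\pi$'s delivered content. Combining (a) and (b) gives the elementwise ordering $\bm\Xi_{\text{NP-MASIF-LGFS}}(t)\le \bm\Delta_\pi(t)$ after the usual permutation (the MAF rule forces the correct matching of flows so that the comparison is componentwise, not merely after sorting; here symmetry of $p_t$ also absorbs any residual relabeling). Applying a symmetric non-decreasing $p_t$ preserves the inequality pathwise, and taking $\phi$ of the resulting process and then expectations — invoking \eqref{eq_order} — yields \eqref{thm3eq1} and the first inequality of \eqref{thm3eq2}. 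The second inequality of \eqref{thm3eq2} is immediate from $\Xi_n(t)\le\Delta_n(t)$ (Fig.~\ref{fig_times2}) applied to the NP-MASIF-LGFS path itself, monotonicity of $p_t$, and monotonicity of $\phi$.

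I expect the main obstacle to be making the event-matching in step (a) rigorous in the \emph{multi-server}, multi-flow setting: with $M>1$ servers one must track which server is assigned to which flow and argue that NP-MASIF-LGFS's set of $k$ earliest service-starts can be coupled, server by server, to dominate any non-preemptive $\pi$'s set of $k$ earliest deliveries, despite $\pi$ being allowed to idle servers, hold stale packets, or assign servers to flows arbitrarily. The cleanest route is probably to define, for each policy, the sorted sequence of ``informative events'' (service-starts for NP-MASIF-LGFS, deliveries for $\pi$) across all servers and flows, and prove by induction over these events a two-part invariant: at every event time, (1) the number of informative events completed by NP-MASIF-LGFS is at least that of $\pi$, and (2) the sorted vector of served-generation-times under NP-MASIF-LGFS weakly dominates the sorted vector of delivered-generation-times under $\pi$. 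The NBU property is used only to push event times of NP-MASIF-LGFS earlier in the coupling; the MASIF+LGFS priority handles the generation-time dominance; synchronization of $\mathcal{I}$ guarantees the needed fresh packets are present; and symmetry of $p_t$ lets the final penalty comparison be stated without pinning down the flow permutation. I would defer the full induction to an appendix, mirroring the structure of the proof of Theorem~\ref{thm1} in Appendix~\ref{app1}.
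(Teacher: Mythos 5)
Your overall architecture is essentially the paper's: couple the \emph{service-start} events of NP-MASIF-LGFS against the \emph{delivery} events of an arbitrary $\pi\in\Pi_{np}$, use the NBU property for the timing comparison, use the MASIF-LGFS priority together with synchronized generations/arrivals to argue that what starts service is the freshest packet (generated at $W(t)=\max_i\{S_i:A_i\le t\}$), compare the \emph{sorted} age vectors by induction over events, invoke symmetry and monotonicity of $p_t$ plus \eqref{eq_order}, and obtain the last line of \eqref{thm3eq2} from $\Xi_n(t)\le\Delta_n(t)$. The paper formalizes your step (a) via a ``weak work-efficiency ordering'' coupling imported from the delay-minimization literature, plus an auxiliary policy $\pi_1'$ whose delivery instants are relocated to the corresponding service-start instants of NP-MASIF-LGFS so that the inductive comparison can be applied at common event times.

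There are, however, two concrete gaps in your sketch. First, the NBU inequality is invoked in the wrong direction and with the roles of the policies swapped: \eqref{NBU_Inequality} says the \emph{residual} service time of a job in progress is stochastically \emph{no larger} than a fresh service time, whereas you assert that a fresh service time is stochastically no larger than the residual of a partially completed job in $\pi$ --- that is the NWU direction and is false here, and it is also not the comparison the argument needs. What is needed is the opposite pairing: when $\pi$ serves a packet over $[\tau,\nu]$ with a fresh service time, the residuals of the busy servers \emph{under NP-MASIF-LGFS} are stochastically dominated by that fresh service time, so by work conservation NP-MASIF-LGFS \emph{starts} some packet during $[\tau,\nu]$ --- but only \emph{provided its queue is nonempty} on that interval. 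That proviso is the second gap: your unconditional claim that the $k$-th service start of NP-MASIF-LGFS precedes the $k$-th delivery of $\pi$ cannot be delivered by the coupling when the NP-MASIF-LGFS queue is empty; the paper handles this case separately by observing that then every arrived packet has already begun service, so $\Xi_{[i]}(t)=t-W(t)$ for all $i$, which lower-bounds the age of every policy, and the sorted comparison holds trivially. With the NBU direction corrected, the empty-queue case split off, and the conclusion stated as a sorted-vector (not flow-by-flow) comparison absorbed by the symmetry of $p_t$, your induction plan coincides with the paper's proof.
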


\begin{proof}[Proof idea]
In the NP-MASIF-LGFS policy, if a packet from flow $n^*$ begins service, it implies that flow $n^*$ possesses the \emph{maximum} age of served information before the service starts. If the packet generation and arrival times are synchronized across the flows, flow $n^*$ also exhibits the \emph{minimum} age of served information after the service starts. The proof of Theorem \ref{thm3} relies on this property and a sample-path argument that is developed for NBU service time distributions. 
\ifreport
See Appendix \ref{app2} for the details.
\else
See our technical report \cite{SunMultiFlow18} and \cite{sun2016delay,sun2017delay} for the details.
\fi
\end{proof}


Considering the close approximation between the age of served information $\bm{\Xi}_{\text{NP-MASIF-LGFS}}(t)$ and the age of information $\bm{\age}_{\text{NP-MASIF-LGFS}}(t)$ in \eqref{thm3eq2}, we can conclude that the NP-MASIF-LGFS policy is near age-optimal.
Furthermore, in the case of the average age metric as defined in \eqref{eq_avgage} (i.e., $p_t = p_{\text{avg}} $ for all $t$), we can derive the following corollary: 
\begin{corollary}\label{coro4}
Under the conditions of Theorem \ref{thm3}, it holds that for all $\mathcal{I}$
\begin{align}\label{eq_gap}
\min_{\pi\in\Pi_{np}}\! [\bar{\age}_{ \pi}|\mathcal{I}] \!\leq\! [\bar{\age}_{\text{NP-MASIF-LGFS}}|\mathcal{I}]\!\leq\! \min_{\pi\in\Pi_{np}}\! [\bar{\age}_{ \pi}|\mathcal{I}] \!+\! \frac{1}{\mu},
\end{align}
where 
\begin{align}\label{eq_gap11}
[\bar{\age}_{ \pi}|\mathcal{I}] = \lim\sup_{T\rightarrow \infty} \frac{1}{T} \mathbb{E}\left[\int_0^T p_{\text{avg}} \circ\bm{\Delta}_\pi (t) dt \Bigg| \mathcal{I}\right]
\end{align} is the expected time-average of the average age of the $N$ flows, and $1/\mu$ is the mean packet transmission time.
\end{corollary}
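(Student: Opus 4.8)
The plan is to derive the two-sided bound \eqref{eq_gap} directly from the stochastic-ordering statement of Theorem \ref{thm3}, specialized to the separable penalty $p_t=p_{\text{avg}}$ and to the particular non-decreasing functional $\phi$ that extracts the long-run time-average. Concretely, for a sample path $\{x(t),t\in[0,\infty)\}$ define $\phi(\{x(t)\}) = \limsup_{T\to\infty}\frac{1}{T}\int_0^T x(t)\,dt$; this $\phi$ is non-decreasing in the sense used in the excerpt (if $x(t)\le y(t)$ for all $t$ then the time-averages are ordered), so \eqref{eq_order} and \eqref{thm3eq2} apply to it. Taking $x(t)=p_{\text{avg}}\circ\bm{\Xi}_{\text{NP-MASIF-LGFS}}(t)$ and $x(t)=p_{\text{avg}}\circ\bm{\Delta}_\pi(t)$, the left inequality of \eqref{thm3eq2} yields
\begin{align}\label{eq_gapproof1}
\limsup_{T\to\infty}\frac{1}{T}\mathbb{E}\!\left[\int_0^T p_{\text{avg}}\circ\bm{\Xi}_{\text{NP-MASIF-LGFS}}(t)\,dt\,\Big|\,\mathcal{I}\right] \le \min_{\pi\in\Pi_{np}} [\bar{\age}_\pi|\mathcal{I}].
\end{align}

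Next I would relate the time-average of the age of served information $\bm{\Xi}_{\text{NP-MASIF-LGFS}}$ to that of the true age $\bm{\age}_{\text{NP-MASIF-LGFS}}$. Since $\Xi_n(t)\le\age_n(t)$ always (Fig. \ref{fig_times2}), the lower bound $[\bar{\age}_{\text{NP-MASIF-LGFS}}|\mathcal{I}]\ge\limsup_T \frac1T\mathbb{E}[\int_0^T p_{\text{avg}}\circ\bm{\Xi}_{\text{NP-MASIF-LGFS}}(t)\,dt\,|\,\mathcal{I}]$ holds trivially; combined with \eqref{eq_gapproof1} this is not yet enough. The key quantitative step is to bound the gap $\age_n(t)-\Xi_n(t)$ in time-average. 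In the non-preemptive NP-MASIF-LGFS policy, once a packet of flow $n$ with generation time $S$ starts service at time $V$, the quantity $\max_i\{S_{n,i}:V_{n,i}\le t\}$ jumps to (at least) $S$ at $t=V$, while $\max_i\{S_{n,i}:D_{n,i}\le t\}$ jumps to (at least) $S$ only at the delivery time $D=V+X$, where $X$ is that packet's i.i.d.\ transmission time. Hence on every sample path, for each $n$, $\age_n(t)-\Xi_n(t)$ is upper bounded by the residual transmission time of the packet currently (or most recently) in service for flow $n$, and a renewal-reward / elementary integration argument over the successive service intervals of flow $n$ gives
\begin{align}\label{eq_gapproof2}
\limsup_{T\to\infty}\frac{1}{T}\int_0^T \big(\age_n(t)-\Xi_n(t)\big)\,dt \le \text{(mean transmission time)} = \frac{1}{\mu},
\end{align}
and the same bound survives taking $\mathbb{E}[\cdot|\mathcal{I}]$ and averaging over $n$, since $p_{\text{avg}}$ is the arithmetic mean over flows. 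Adding \eqref{eq_gapproof1}, the trivial lower bound, and \eqref{eq_gapproof2}, and using $[\bar{\age}_{\text{NP-MASIF-LGFS}}|\mathcal{I}]\ge\min_{\pi\in\Pi_{np}}[\bar{\age}_\pi|\mathcal{I}]$, produces the desired sandwich \eqref{eq_gap}.

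The main obstacle I anticipate is making \eqref{eq_gapproof2} fully rigorous: one must argue that the time-average over $[0,T]$ of the per-flow age-excess, which is a sum of triangular-shaped contributions of area $\tfrac12 X_k^2$-type terms divided by inter-service gaps, is controlled by the mean transmission time uniformly as $T\to\infty$, and that edge effects (the interval straddling $T$, the initial transient before the first service of flow $n$, and intervals with no delivery due to — here absent, since $q=0$ — errors) are negligible in the $\limsup$. This requires a clean accounting of which packet is "responsible" for the gap at each time $t$ and an appeal to the finiteness of $1/\mu$ together with the strong law along the renewal structure induced by the work-conserving, non-preemptive schedule; care is also needed because flow $n$'s servers may be idle for stretches, during which $\age_n(t)-\Xi_n(t)=0$, so those stretches only help. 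Once this bookkeeping is done, everything else is a direct substitution into Theorem \ref{thm3}.
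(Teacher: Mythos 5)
Your overall route is the intended one: the paper itself proves this corollary by pointing to Theorem 12 of \cite{BedewyJournal2017}, whose argument is exactly your two-step plan --- use Theorem \ref{thm3} (specialized to $p_{\text{avg}}$) to show the time-average of $p_{\text{avg}}\circ\bm\Xi_{\text{NP-MASIF-LGFS}}$ lower-bounds $\min_{\pi\in\Pi_{np}}[\bar\Delta_\pi|\mathcal{I}]$, then bound the gap between $\bar\Xi$ and $\bar\Delta$ under NP-MASIF-LGFS by the mean service time. One small repair first: since $[\bar\Delta_\pi|\mathcal{I}]$ in \eqref{eq_gap11} has the expectation inside the $\limsup$, you should apply \eqref{thm3eq2} with the functionals $\phi_T(\{x(t)\})=\frac{1}{T}\int_0^T x(t)\,dt$ for each fixed $T$ and then take $\limsup_T$ of the resulting expectations; choosing $\phi=\limsup_T\frac1T\int_0^T$ directly gives $\mathbb{E}[\limsup(\cdot)]$, which does not match \eqref{eq_gap11} without an unjustified exchange.

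The genuine gap is in your justification of \eqref{eq_gapproof2}. The pointwise claim that $\Delta_n(t)-\Xi_n(t)$ is bounded by the residual transmission time of the packet in service is false: $\Delta_n(t)-\Xi_n(t)=W_n(t)-U_n(t)$, the difference between the generation time of the freshest packet that has \emph{started} service and that of the freshest \emph{delivered} packet. If arrivals are sparse (say a packet generated at time $0$ is delivered at time $1$, and the next packet, generated at time $100$, starts service at $100$), then during that service the gap is about $100$ while the residual service time is less than $1$. Consequently the ``triangles of area $\tfrac12X_k^2$ via renewal reward'' accounting does not describe the actual geometry and would not yield $1/\mu$. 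The correct accounting (the one behind Theorem 12 of \cite{BedewyJournal2017}) is a telescoping/parallelogram argument: between consecutive informative packets $k$ and $k+1$ the two sawtooth curves differ only on $[V_{k+1},D_{k+1}]$, where the gap equals $S_{k+1}-S_k$, so the area between them over $[0,T]$ is essentially $\sum_k (S_{k+1}-S_k)\,X_{k+1}$ plus boundary terms; the generation-time increments telescope to at most $T$, and taking conditional expectation given $\mathcal{I}$ (service times being \emph{i.i.d.} with mean $1/\mu$ and independent of $\mathcal{I}$) gives $\limsup_T\frac1T\mathbb{E}[\int_0^T(\Delta_n(t)-\Xi_n(t))dt\,|\,\mathcal{I}]\le 1/\mu$, which is the expected-time-average bound you actually need (the per-sample-path version you assert is stronger than required and not needed). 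With \eqref{eq_gapproof2} replaced by this argument, the rest of your assembly of \eqref{eq_gap} is fine.
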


\begin{proof}
The proof of Corollary \ref{coro4} is the same as that of Theorem 12 in \cite{BedewyJournal2017} and hence is omitted here. 
\end{proof}

By Corollary \ref{coro4}, the  average age of the NP-MASIF-LGFS policy is within an additive gap from the optimum, where the gap $1/\mu$ is invariant of the packet arrival and generation times $\mathcal{I}$, the number of flows $N$, and the number of servers $M$. 

Similar to Corollary \ref{coro1}, by taking a mixture over the different realizations of $\mathcal{I}$, one can remove the condition $\mathcal{I}$ from \eqref{thm3eq1}, \eqref{thm3eq2}, \eqref{eq_gap}, and \eqref{eq_gap11}.

The sampling-path argument utilized in the proof of Theorem \ref{thm3} can effectively handle multiple flows,  multiple servers, and \emph{i.i.d.} NBU transmission time distributions. This is achieved by establishing a coupling between the start time of packet transmissions in the NP-MASIF-LGFS policy and the completion time of packet transmissions in any other work-conserving policy from $\Pi_{np}$. However, extending this sampling-path argument to encompass the scenario of \emph{i.i.d.} transmission errors poses additional challenges that are currently difficult to overcome.

\section{Multi-flow Status Update Scheduling: \\ The Discrete-time Case}
In this section, we investigate age-optimal scheduling in discrete-time status updating systems, where the variables $S_{n,i}, A_{n,i}, D_{n,i}, t, U_{n} (t), \Delta_{n} (t)$ are all multiples of the period $T_s$, the transmission time of each packet is fixed as $T_s$, and the packet submissions are subject to \emph{i.i.d.} errors with an error probability $q\in[0,1)$. Service preemption is not allowed in discrete-time systems.

For multiple-server, discrete-time systems, a scheduling policy is defined by combining the MAF and LGFS service disciplines as follows:

\begin{definition} \emph{Discrete Time, Maximum Age First, Last Generated First Served (DT-MAF-LGFS) policy:} This is a work-conserving scheduling policy for multiple-server, discrete-time systems with synchronized packet generations and arrivals. It operates as follows:

\begin{itemize}
\item[1.] When  time $t$ is a multiple of period $T_s$, if the queue is not empty, an idle server is assigned to process the most recently generated packet from the flow with the maximum age, with ties broken arbitrarily. 

\item[2.] The next  idle server is assigned to process the most recently generated packet from the flow with the second maximum age, with ties broken arbitrarily. 

\item[3.] This process continues until either (i) the most recently generated packet of each flow is under service or has been delivered, or (ii) all servers are busy. 

\item[4.] If the most recently generated packet of each flow is under service or has been delivered, and there are additional idle servers, then these servers can be arbitrarily assigned to send the remaining packets in the queue, until the queue becomes empty.
\end{itemize}
\end{definition}

One can observe that the DT-MAF-LGFS policy for discrete-time systems is similar to the P-MAF-LGFS policy designed for continuous-time systems. 

The age optimality of the DT-MAF-LGFS policy is obtained in the following theorem.

\begin{theorem}(Discrete-time, multiple flows, multiple servers, constant transmission times with transmission errors)\label{thm4}
In discrete-time status updating systems, if  (i) the transmission errors are \emph{i.i.d.} with an error probability $q\in [0,1)$, (ii)  the packet generation and arrival times are {synchronized} across the $N$ flows, and (iii) the packet transmission times are fixed as $T_s$, then it holds that for all $\mathcal{I}$, all $p_t \in\mathcal{P}_{\text{sym}}$, and all $\pi\in\Pi_{np}$ 
\begin{align}\label{thm4eq1}
&[\{p_t \circ\bm{\Delta}_{\text{DT-MAF-LGFS}}(t), t=0,T_s,2T_s,\ldots\}\vert\mathcal{I}] \nonumber\\
\leq_{\text{st}} & [\{p_t \circ\bm{\Delta}_\pi(t), t=0,T_s,2 T_s,\ldots\}\vert\mathcal{I}],
\end{align}
or equivalently, for all $\mathcal{I}$, all $p_t \in\mathcal{P}_{\text{sym}}$, and all non-decreasing functional $\phi$
 \begin{align}\label{thm4eq2}
&\mathbb{E}\left[\phi (\{p_t \circ\bm{\Delta}_{\text{DT-MAF-LGFS}}(t),t=0,T_s,2T_s,\ldots\})\vert\mathcal{I}\right] \nonumber\\
= & \min_{\pi\in\Pi_{np}} \mathbb{E}\left[\phi (\{p_t \circ\bm{\Delta}_\pi(t),t=0,T_s,2T_s,\ldots\})\vert\mathcal{I}\right],
\end{align}
provided that the expectations in \eqref{thm4eq2} exist.
\end{theorem}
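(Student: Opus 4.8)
The plan is to mirror the sample-path / coupling argument that underlies Theorem \ref{thm1}, adapted to the discrete-time, non-preemptive setting. First I would set up a coupling between the DT-MAF-LGFS policy and an arbitrary non-preemptive causal policy $\pi\in\Pi_{np}$ so that, on each sample path, the two systems see the same realizations of $\mathcal{I}$, the same packet transmission-time slots (all equal to $T_s$), and the same sequence of transmission-error events attached to each server-slot. Because time is slotted and every service occupies exactly one slot, there is no subtle memorylessness issue to exploit here (unlike the exponential case): the coupling can simply identify, slot by slot, the error indicator of the $k$-th successful-or-failed transmission in each system. I would then prove, by induction on the slot index $t=0,T_s,2T_s,\ldots$, an inductive invariant that implies the desired stochastic ordering.

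The core of the argument is the right invariant. Following the single-flow work \cite{BedewyJournal2017} and the observation highlighted after the P-MAF-LGFS definition, I would use a statement of the form: at every slot boundary $t$, after ordering the flows by age, the sorted age vector under DT-MAF-LGFS is componentwise $\le$ the sorted age vector under $\pi$, i.e. $\Delta_{[j],\text{DT-MAF-LGFS}}(t)\le \Delta_{[j],\pi}(t)$ for all $j=1,\dots,N$. Since $p_t$ is symmetric and non-decreasing, this pointwise-in-time, componentwise ordering of the sorted vectors immediately yields $p_t\circ\bm\Delta_{\text{DT-MAF-LGFS}}(t)\le p_t\circ\bm\Delta_\pi(t)$ for every $t$ on the coupled sample path, which in turn gives the stochastic ordering of the processes via the functional characterization in \eqref{eq_order}, and then \eqref{thm4eq2} by taking $\phi$ ranging over non-decreasing functionals. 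The key structural fact that makes the induction close is the one flagged in the paper: under synchronized generations/arrivals, a flow that does not currently hold the minimum age always has its most-recently-generated packet available to be served, so DT-MAF-LGFS is never "blocked" from working on the flows that matter, and LGFS guarantees that whenever a delivery occurs for flow $n$ it resets $\Delta_n$ to $t-S_i$ for the largest feasible generation index $i$ — the best possible reset given $\mathcal{I}$.

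For the inductive step I would consider what happens over one slot $[t,t+T_s)$: (i) ages of all flows increase by $T_s$, which preserves any componentwise ordering of sorted vectors; (ii) new arrivals at $t$ (if any) are identical in both systems and, being synchronized, affect only which generation time is available, not the age values directly; (iii) up to $M$ deliveries occur, and by the coupling the same server-slots succeed or fail in both systems. The delicate part is a sorted-vector inductive lemma showing that if DT-MAF-LGFS allocates its successful slots to the $M$ (or fewer) flows with largest age while $\pi$ allocates arbitrarily, then the resulting sorted age vector under DT-MAF-LGFS remains dominated — this is a discrete exchange/interchange argument showing that greedily reducing the largest coordinates is optimal for every symmetric non-decreasing $p_t$. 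I expect this interchange lemma, together with correctly bookkeeping the non-preemptive constraint (a server busy under $\pi$ since before slot $t$ cannot be reassigned, and DT-MAF-LGFS's own in-service packets are likewise committed), to be the main obstacle; handling transmission errors is then essentially free because a failed slot simply means "no delivery on that server this slot" in both systems simultaneously. Finally, since all quantities are conditioned on a fixed $\mathcal{I}$, the statement \eqref{thm4eq1}–\eqref{thm4eq2} follows, and the details would be deferred to an appendix in the same style as the proof of Theorem \ref{thm1}.
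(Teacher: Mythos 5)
Your overall architecture (a coupling plus an inductive comparison of the sorted age vectors, then symmetry and monotonicity of $p_t$ and the functional characterization of $\leq_{\text{st}}$) is the same as the paper's, but the specific coupling you propose breaks the induction, and this is exactly the point where the paper does something you dismiss as ``essentially free.'' You attach the i.i.d.\ error indicators to \emph{server--slots}, so that ``the same server-slots succeed or fail in both systems.'' Under that coupling the pathwise invariant $\Delta_{[i],\text{DT-MAF-LGFS}}(t)\leq\Delta_{[i],\pi}(t)$ is simply false. Concretely, take $N=M=2$, both systems in the same state with sorted ages $(a_1,a_2)$, $a_1>a_2$, both flows' freshest packets (generated at $W(t)$) just arrived and undelivered, and let $r=t+T_s-W(t)$ denote the reset age. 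DT-MAF-LGFS may place the max-age flow on server $1$ and the other flow on server $2$, while $\pi$ transmits the same two freshest packets with the server assignment swapped. If server $1$ fails and server $2$ succeeds, your coupling yields sorted vectors $(a_1+T_s,\,r)$ under DT-MAF-LGFS and $(a_2+T_s,\,r)$ under $\pi$, so the largest age is strictly bigger under the allegedly optimal policy and the inductive step cannot close. Your interchange lemma tacitly assumes DT-MAF-LGFS gets to place its \emph{successes} on the largest-age flows, but with server-indexed errors it only places its \emph{attempts} there; which attempts succeed is then effectively chosen adversarially through $\pi$'s server assignment.

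The paper avoids this by a different, non-trivial coupling (its Lemma \ref{coupling_4}): since the errors are i.i.d.\ and independent of the scheduling policy, one can construct, by the inductive argument of Theorem 6.B.3 of \cite{StochasticOrderBook}, coupled policies in which a transmission serving the flow of age rank $i$ at time $t$ succeeds in DT-MAF-LGFS if and only if the transmission serving the flow of age rank $i$ at time $t$ succeeds in $\pi$ --- i.e., success/failure events are matched by \emph{age rank}, not by physical server. With that rank-matched coupling, the paper's inductive comparison (Lemma \ref{thm4lem2}) goes through, including the cases you gloss over where $\pi$ serves stale packets or idles servers (handled via $\Delta_{[i],\pi_1}(t+T_s)\geq \Delta_{[j_i],\pi_1}(t)+T_s$ and $\Delta_{[i],\pi_1}(t+T_s)\geq t+T_s-W(t)$). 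So the missing idea is precisely the construction of the error coupling; your claim that the discrete-time setting removes any coupling subtlety because services are deterministic is where the argument fails, and the remainder of your outline cannot be salvaged without replacing the server-slot coupling by the rank-based one.
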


\begin{proof}See Appendix \ref{app_thm4}.  \end{proof}

\begin{figure}
\centering 
\includegraphics[width=0.45\textwidth]{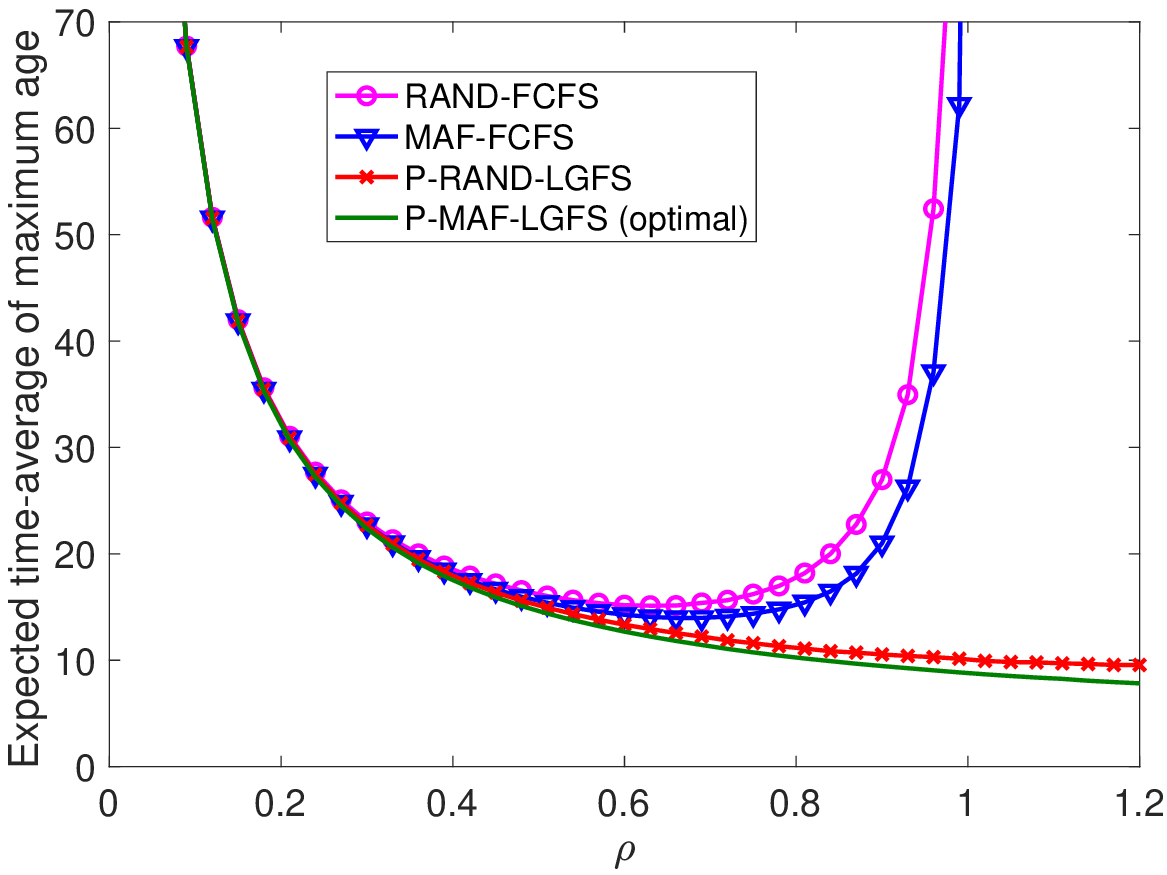} 

\caption{Expected time-average of the maximum age of 3 flows in a system with a single server and \emph{i.i.d.} exponential transmission times.}
\vspace{-5mm}
\label{fig_simulation1} 
\end{figure} 

According to  Theorem \ref{thm4}, the DT-MAF-LGFS policy minimizes the stochastic process $[\{p_t \circ\bm{\Delta}_\pi(t), t=0,T_s,2T_s,\ldots)\}|\mathcal{I}]$ in terms of stochastic ordering within discrete-time status updating systems. This optimality result holds for any age penalty function in $\mathcal{P}_{\text{sym}}$, any number of flows $N$, any number of servers $M$, any synchronized packet generation and arrival times in $\mathcal{I}$, and regardless the existence of \emph{i.i.d.} transmission errors. 

Theorem \ref{thm4} generalizes \cite[Theorem 1]{IgorAllerton2016}, 
by allowing for multiple servers and a broader range of age penalty functions.
Similar to Corollary \ref{coro1}, one can remove the condition $\mathcal{I}$ from \eqref{thm4eq1} and \eqref{thm4eq2}.

\ignore{
\subsection{Periodic Arrivals, Multiple Servers, Exponential Service Times}

\begin{theorem}\label{thm2}
If (i) the packet arrival times are \emph{periodic} and (ii) the packet transmission times are exponential distributed and \emph{i.i.d.} across servers and time, then for all $\mathcal{I}$, all $p \in\mathcal{P}_{\text{Sch}}$, and all $\pi\in\Pi$ 
\begin{align}\label{thm2eq1}
&[\{p \circ\bm{\Delta}_{\text{prmp, MAR-MA}}(t), t\in [0,\infty)\}\vert\mathcal{I}] \nonumber\\
\leq_{\text{st}} & [\{p \circ\bm{\Delta}_\pi(t), t\in [0,\infty)\}\vert\mathcal{I}],
\end{align}
or equivalently, for all $\mathcal{I}$, all $p \in\mathcal{P}_{\text{Sch}}$, and all non-decreasing functional $\phi:\mathbb{V}\mapsto\mathbb{R}$
 \begin{align}\label{thm2eq2}
&\mathbb{E}\left[\phi (\{p \circ\bm{\Delta}_{\text{prmp, MAR-MA}}(t),t\in [0,\infty)\})\vert\mathcal{I}\right] \nonumber\\
= & \min_{\pi\in\Pi} \mathbb{E}\left[\phi (\{p \circ\bm{\Delta}_\pi(t),t\in [0,\infty)\})\vert\mathcal{I}\right],
\end{align}
provided that the expectations in \eqref{thm2eq2} exist.
\end{theorem}

\section{Proof of Theorem \ref{thm2}}
We first establish two lemmas that are useful in the proof of Theorem \ref{thm2}. 
Let the age vector $\bm\Delta_{\pi}(t)$ denote the \emph{system state} of policy $\pi$ at time $t$ and $\{\bm\Delta_{\pi}(t),t\in [0,\infty)\}$ denote the \emph{state process} of policy $\pi$. Because the system starts to operate at time $0$, we assume that $\bm\Delta_{\pi}(0^-)=\bm 0$ at time $t=0^-$ for all $\pi\in\Pi$. For notational simplicity, let policy $P$ represent the preemptive MAR-MA policy. 

We define an \emph{MAR-MA ordering} that sorts packets according to the priority rule in the MAR-MA policy: As shown Fig. \ref{fig_policies}(a), the packets with larger age reduction have higher priorities; among the packets with the same age reduction, the packets with larger age have higher priorities. 
Using the memoryless property of exponential distribution, we can obtain the following coupling lemma:

\begin{lemma}\emph{(Coupling Lemma)}\label{thm2coupling}
For any given $\mathcal{I}$, consider policy $P$ and any \emph{work-conserving} policy $\pi\in \Pi$. If  the packet transmission times are exponential distributed and \emph{i.i.d.} across servers and time,   
 then there exist policy $P_1$ and  policy $\pi_1$ in the same probability space which satisfy the same scheduling disciplines with policy $P$ and policy $\pi$, respectively,  such that 
\begin{itemize}
\itemsep0em 
\item[1.] The state process $\{\bm\Delta_{P_1}(t),t\in [0,\infty)\}$ of policy $P_1$ has the same distribution with the state process $\{\bm\Delta_{P}(t),t\in [0,\infty)\}$ of policy $P$,
\item[2.] The state process $\{\bm\Delta_{\pi_1}(t),t\in [0,\infty)\}$ of policy $\pi_1$ has the same distribution with the state process $\{\bm\Delta_{\pi}(t),t\in [0,\infty)\}$  of policy $\pi$,
\item[3.] If a packet with the $j$-th highest MAR-MA order among all the packets under service is delivered at time $t$ in policy $P_1$ as $\bm\Delta_{P_1}(t)$ evolves, then almost surely, a packet  with the $j$-th highest MAR-MA order among all the packets under service is  delivered at time $t$ in policy $\pi_1$ as $\bm\Delta_{\pi_1}(t)$ evolves; and vice versa. 
\end{itemize} 
\end{lemma}
\begin{proof}
Note that all policies have identical arrival processes, and the transmission times are  memoryless. Following the inductive sample-path construction in the proof of \cite[Theorem 6.B.3]{StochasticOrderBook}, one can construct the packet deliveries one by one in policy $P_1$ and policy $\pi_1$ to prove this lemma. The details are omitted. 
\end{proof}

\begin{lemma} \emph{(Inductive Comparison)}\label{thm2lem2}
Under the conditions of Lemma \ref{thm2coupling}, 
suppose that a packet is delivered in both policy $P_1$ and policy $\pi_1$  at the same time $t$. The system state  of policy $P_1$ is $\bm\Delta_{P_1}$ before the packet delivery, which becomes $\bm\Delta_{P_1}'$ after the packet delivery. The system state  of policy $\pi_1$ is $\bm\Delta_{\pi_1}$ before the packet delivery, which becomes $\bm\Delta_{\pi_1}'$ after the packet delivery. If  the packet arrival times are  \emph{periodic} and
\begin{equation}\label{thm2hyp1}
\bm\Delta_{P_1} \prec_{\text{w}} \bm\Delta_{\pi_1},
\end{equation}
then
\begin{equation}\label{thm2law6}
\bm\Delta_{P_1}' \prec_{\text{w}} \bm\Delta_{\pi_1}'.
\end{equation}  
\end{lemma}

\begin{proof}

For periodic arrivals with a period $T$, let $V(t) = \max\{iT: iT \leq t,i=1,2,\ldots\}$ 
be the time-stamp of the freshest packet of each flow that has been generated by time $t$. At time $t$, because no packets is generated later than $V(t)$, we can obtain
\begin{align}
\Delta_{i,P_1} \geq\Delta_{i,P_1}' \geq t-V(t),~i=1,\ldots,N,\nonumber\\
\Delta_{i,\pi_1} \geq\Delta_{i,\pi_1}' \geq t-V(t),~i=1,\ldots,N.\label{eq_proof_2}
\end{align} 

Policy $P_1$ follows the same scheduling discipline with the preemptive MAR-MA policy.

Hence, the delivered packet in policy $P_1$ must be from the flow with the maximum age $\Delta_{[1],P_1}$ (denoted as flow $n^*$), and the delivery packet must be generated at time $V(t)$. In other words, the age of flow $n^*$ is reduced from the maximum age $\Delta_{[1],P_1}$ to the minimum age $\Delta_{[N],P_1}'=t-V(t)$, and the age of the other $(N-1)$ flows remain unchanged. Hence, 
\begin{align}\label{eq_proof_3}
\Delta_{[i],P_1}' &= \Delta_{[i+1],P_1},~i=1,\ldots,N-1,\\
\Delta_{[N],P_1}' &= t - V(t). \label{eq_proof_4}
\end{align}

In policy $\pi_1$, the delivered packet can be any packet from any flow. For all possible cases of policy $\pi_1$, it must hold that 
\begin{align}\label{eq_proof_1}
\Delta_{[i],\pi_1}' \geq \Delta_{[i+1],\pi_1},~i=1,\ldots,N-1. 
\end{align}
By combining \eqref{hyp1}, \eqref{eq_proof_3}, and \eqref{eq_proof_1}, we have
\begin{align}
\Delta_{[i],\pi_1}' \geq \Delta_{[i+1],\pi_1} \geq \Delta_{[i+1],P_1} = \Delta_{[i],P_1}',~i=1,\ldots,N-1.\nonumber
\end{align}
In addition, combining \eqref{eq_proof_2} and \eqref{eq_proof_4}, yields
\begin{align}
\Delta_{[N],\pi_1}' \geq  t-V(t) = \Delta_{[N],P_1}'.\nonumber
\end{align}
By this, \eqref{law6} is proven.
\end{proof}
}

\section{Numerical Results}
In this section, we evaluate the age performance of several multi-flow scheduling policies. These scheduling policies are defined by combining the flow selection disciplines $\{$MAF, MASIF, RAND$\}$ and the packet selection disciplines $\{$FCFS, LGFS$\}$, where  RAND represents randomly choosing a flow among the flows with un-served packets. The packet generation times $S_i$ follow a Poisson process with rate $\lambda$, and the time difference $(A_i-S_i)$ between packet generation and arrival is equal to either 0 or $4/\lambda$ with equal probability. The mean transmission time of each server is set as $\mathbb{E}[X]=1/\mu=1$. Hence, the traffic intensity is $\rho =  \lambda N/M$, where $N$ is the number of flows and $M$ is the number of servers.

\begin{figure}
\centering 
\includegraphics[width=0.45\textwidth]{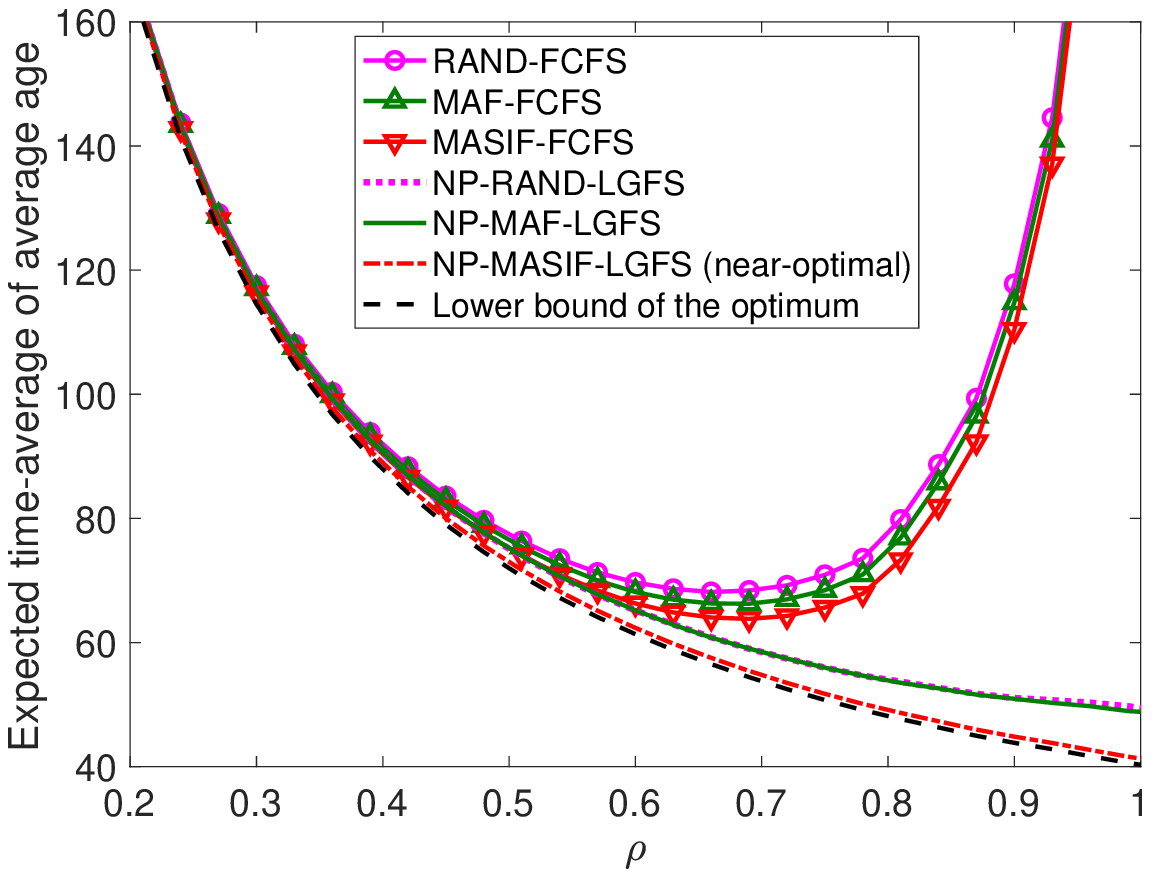} 
\caption{Expected time-average of the average age of 50 flows in a system with 3 servers and \emph{i.i.d.} NBU service times.}
\vspace{-5mm}

\label{fig_simulation2} 
\end{figure} 

Figure \ref{fig_simulation1} illustrates the expected time-average of the maximum age $p_{\max} (\bm\age(t))$ of 3 flows in a system with a single server and \emph{i.i.d.} exponential transmission times. One can see that the P-MAF-LGFS policy has the best age performance and its age is quite small even for $\rho>1$, in which case  the queue is actually unstable. On the other hand,  both the RAND and FCFS disciplines have much higher age. Note that there is no need for preemptions under the FCFS discipline.  Figure \ref{fig_simulation2} plots the expected time-average of the average age $p_{\text{avg}} (\bm\age(t))$ of 50 flows in a system with 3 servers and \emph{i.i.d.} NBU transmission times. In particular, the  transmission time $X$ follows the following shifted exponential distribution:
\begin{align}
\Pr[X>x] = \left\{\begin{array}{l l}1,&\text{if}~x<\frac{1}{3};\\
\exp[-\frac{3}{2}(x-\frac{1}{3})],&\text{if}~x\geq \frac{1}{3}.
\end{array}\right.
\end{align}
One can observe that the NP-MASIF-LGFS policy is better than the other policies, and is quite close to the age lower bound where the gap from the lower bound is no more than the mean transmission time $\mathbb{E}[X]=1$. {\blue One interesting observation is that the NP-MASIF-LGFS policy is better than the NP-MAF-LGFS policy for NBU transmission times. The reason behind this  is as follows: When multiple servers are idle,  the NP-MAF-LGFS policy will assign these servers to process multiple packets from the flow with the maximum age (say flow $n$). This reduces the age of flow $n$, but at a cost of postponing the service of the flows with the second and third maximum ages. On the other hand, in the NP-MASIF-LGFS policy, once a packet from the flow with the maximum age of served information  (say flow $m$) is assigned to a server, the age of served information of flow $m$ drops greatly, and the next server will be assigned to process the flow with the second maximum age of served information. 
As shown in \cite{sun2016delay,sun2017delay}, using multiple parallel servers to process different flows is beneficial for NBU service times. 

 }

\section{Conclusion}\label{sec_conclusion}
We have proposed causal scheduling policies and developed a unifying sample-path approach to prove that these scheduling policies are (near) optimal for minimizing age of information in continuous-time and discrete-time status updating systems with multiple flows, multiple servers, and transmission errors. 

\section*{Acknowledgement}

We appreciate Elif Uysal's support throughout this endeavor. Additionally, we thank the anonymous reviewers for their valuable comments.

\bibliographystyle{IEEEtran}
\bibliography{ref,ref1,sueh,trialout}
\appendices

\section{Proof of Theorem \ref{thm1}}\label{app1}
Let the age vector $\bm\Delta_{\pi}(t)$ represent the \emph{system state} of policy $\pi$ at time $t$ and $\{\bm\Delta_{\pi}(t),t\in [0,\infty)\}$ be the \emph{state process} of policy $\pi$. For notational simplicity, let policy $P$ represent the P-MAF-LGFS policy, which is a work-conserving policy. We first establish two lemmas that are useful to prove Theorem \ref{thm1}. 
Using
the memoryless property of exponential distribution, we can
obtain the following coupling lemma:
\begin{lemma}\emph{(Coupling Lemma)}\label{coupling}
In continuous-time status up- dating systems, consider policy $P$ and any {work-conserving} policy $\pi\in \Pi$. For any given $\mathcal{I}$, if (i) the transmission errors are \emph{i.i.d.} with an error probability $q\in [0,1)$ and (ii)  the packet transmission times are exponentially distributed and \emph{i.i.d.} across  packets,
 then there exist policy $P_1$ and  policy $\pi_1$ in the same probability space which satisfy the same scheduling disciplines with policy $P$ and policy $\pi$, respectively,  such that 
\begin{itemize}
\itemsep0em 
\item[1.] the state process $\{\bm\Delta_{P_1}(t),t\in [0,\infty)\}$ of policy $P_1$ has the same distribution as the state process $\{\bm\Delta_{P}(t),t\in [0,\infty)\}$ of policy $P$,
\item[2.] the state process $\{\bm\Delta_{\pi_1}(t),t\in [0,\infty)\}$ of policy $\pi_1$ has the same distribution as the state process $\{\bm\Delta_{\pi}(t),t\in [0,\infty)\}$  of policy $\pi$,
\item[3.] if a packet from the flow with age $\Delta_{[i],P_1}(t)$ is successfully delivered at time $t$ in policy $P_1$, then almost surely, a packet from the flow with age $\Delta_{[i],\pi_1}(t)$ is successfully delivered at time $t$ in policy $\pi_1$; and vice versa.
\end{itemize} 
\end{lemma}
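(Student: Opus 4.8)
The plan is to prove Lemma~\ref{coupling} by an inductive, sample-path construction of $P_1$ and $\pi_1$ on a single probability space, carried out along the increasing sequence of \emph{event times} of the system --- the (common, given) packet generation/arrival epochs in $\mathcal{I}$ together with the service-completion epochs --- in the spirit of the construction behind \cite[Theorem~6.B.3]{StochasticOrderBook}. The engine of the argument is the memoryless property of the exponential distribution: conditioned on the entire history of a work-conserving policy up to any time $t$ at which $m$ servers are busy, the $m$ residual transmission times are i.i.d.\ $\mathrm{Exp}(\mu)$ and independent of the past, and whether the next completing transmission suffers an error is an independent $\mathrm{Bernoulli}(q)$. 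Hence, between arrival epochs each policy is a pure-jump process whose only randomness is (i)~an $\mathrm{Exp}(m\mu)$ holding time to the next completion, (ii)~a uniform choice of which busy server completes, and (iii)~an independent $\mathrm{Bernoulli}(q)$ error flag; after each completion the residual times of the still-busy servers may be freshly re-sampled without changing the law.

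Concretely, I would equip the common probability space with i.i.d.\ families $\{E^{(r)}_k\}$ of $\mathrm{Exp}(\mu)$ variables and $\{B^{(r)}_k\}$ of $\mathrm{Uniform}[0,1]$ variables, indexed by a completion round $r$ and a priority rank $k\in\{1,\dots,M\}$. In the inductive step, between two consecutive event times, policy $P_1$ assigns $E^{(r)}_k$ as the residual transmission time of the server holding the rank-$k$ packet in the P-MAF-LGFS priority order (so, since P-MAF-LGFS serves the largest-age flows, the server currently serving the flow with age $\Delta_{[k],P_1}(t)$), and declares an error when $B^{(r)}_k<q$; policy $\pi_1$ assigns the \emph{same} $E^{(r)}_k,B^{(r)}_k$ to its $k$-th busy server, keeping the server-to-rank bookkeeping consistent. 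The next completion then occurs at the common time given by the smallest relevant $E^{(r)}_k$, it is carried out by the server of the same rank index in both policies whenever that index does not exceed both numbers of busy servers, and the success/error outcome is identical. At an arrival epoch both queues are updated by their respective (discipline-identical) rules, the round counter is advanced, and fresh $E,B$ variables are used, which is legitimate by memorylessness. Routine measurability checks then show $\{\bm\Delta_{P_1}(t),t\ge 0\}$ and $\{\bm\Delta_{\pi_1}(t),t\ge 0\}$ have the same laws as $\{\bm\Delta_{P}(t),t\ge 0\}$ and $\{\bm\Delta_{\pi}(t),t\ge 0\}$, giving items~1 and~2, while item~3 follows from the construction together with the defining property of P-MAF-LGFS that its busy servers always serve the flows of largest age: a successful completion on the rank-$i$ server delivers the flow with age $\Delta_{[i],P_1}(t)$, and by the coupling the simultaneous completion in $\pi_1$ is likewise a successful delivery of the flow with age $\Delta_{[i],\pi_1}(t)$; the ``vice versa'' direction is symmetric.

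The delicate point, and the step I expect to be the main obstacle, is that $P_1$ and $\pi_1$ need not have the same number of busy servers at a given time, since their differing delivery histories leave different packets in the two queues. The priority-rank coupling therefore has to be maintained coherently through arrival epochs, where queue contents (and hence the set of busy servers) jump, and through rounds in which one policy has strictly more busy servers than the other; one must check that the extra completions in the policy with more busy servers do not desynchronize the shared rounds and do not break the simultaneity of deliveries or the rank correspondence asserted in item~3. Carefully tracking the server-to-rank assignment across these events, and verifying that the re-sampling at each event preserves the correct joint law, is the technical heart of the lemma; the remainder is standard once this bookkeeping is pinned down. (When $q=0$ the error flags are vacuous and the construction specializes to the error-free coupling used in \cite{SunAoIWorkshop2018,BedewyJournal2017}.)
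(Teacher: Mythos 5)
Your overall strategy---an inductive, one-event-at-a-time sample-path construction in the spirit of Theorem 6.B.3 of \cite{StochasticOrderBook}, driven by the memorylessness of the exponential transmission times and policy-independent \emph{i.i.d.} Bernoulli error flags---is the same route the paper takes (the paper itself only sketches this construction). The genuine problem is that, as written, your construction does not establish item 3, which is the only nontrivial content of the lemma and exactly the property that Lemma \ref{lem2} consumes. You couple the two systems through a busy-server rank index $k$ and then assert that ``by the coupling the simultaneous completion in $\pi_1$ is likewise a successful delivery of the flow with age $\Delta_{[i],\pi_1}(t)$.'' That inference is not valid: in policy $P_1$ the rank-$k$ server is serving the flow with the $k$-th largest age by design of P-MAF-LGFS, but $\pi_1$ mimics an \emph{arbitrary} work-conserving policy, so its $k$-th busy server (under any bookkeeping) need not be serving the flow whose age is $\Delta_{[k],\pi_1}(t)$---it may be serving a stale packet, and the flow with the $i$-th largest age in $\pi_1$ may not be in service at all. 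The age-rank correspondence therefore has to be built directly into how the coupling decides \emph{which age-ranked flow experiences a successful completion}, exploiting the exchangeability of the \emph{i.i.d.} exponential servers and the fact that the error process is \emph{i.i.d.} and unaffected by scheduling (this is what the paper means by constructing the ``transmission success and failure events one by one''); it cannot be read off from a shared server index.

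Relatedly, you correctly flag that $P_1$ and $\pi_1$ can have different numbers of busy servers (their differing delivery histories, together with the one-packet-per-flow constraint, can leave different sets of schedulable packets), so their completion rates differ; but you then leave this as ``the technical heart \ldots to be pinned down.'' That is precisely where the two-sided simultaneity in item 3 (the ``and vice versa'' clause) must be argued, so deferring it means the lemma is not actually proved. To be fair, the paper's own proof omits the same details and simply asserts the coupled construction; but a self-contained proof needs to (i) specify the coupling at the level of age-ranked successful deliveries rather than server indices, and (ii) handle the epochs at which the two systems' busy-server counts diverge, verifying that the marginal laws in items 1 and 2 are preserved while the rank-matched, simultaneous-delivery property of item 3 is maintained.
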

\ifreport
\begin{proof}
Notice that (i) all policies have identical packet generation and arrival times $\mathcal{I}$, (ii) the packet transmission times are  \emph{i.i.d.} memoryless, and (iii) policy $P$ and policy $\pi$ are both work-conserving. In addition, the packet generation/arrival times $\mathcal I$, the packet transmission times, and the transmission failures are governed by three mutually independent stochastic processes, none of which are influenced by the scheduling policy. Because of these facts, service preemption does not affect the distribution of packet delivery times. Following the inductive construction argument used in the proof of Theorem 6.B.3 in \cite{StochasticOrderBook}, one can construct the packet transmission success and failure events one by one in policy $P_1$ and policy $\pi_1$ to prove this lemma. 
In particular, since the transmission errors are \emph{i.i.d.} and they are not influenced by the scheduling policy, it is feasible to couple the packet transmission success and failure events in policy $P_1$ and policy $\pi_1$ in such a way that a packet from the flow with age $\Delta_{[i],P_1}(t)$ is successfully delivered at time $t$ in policy $P_1$ if, and only if, a packet from the flow with age $\Delta_{[i],\pi_1}(t)$ is successfully delivered at time $t$ in policy $\pi_1$. The details are omitted. 
\end{proof}
\else
\begin{proof}
See our technical report \cite{SunMultiFlow18}.
\end{proof}
\fi

We will compare policy $P_1$ and policy $\pi_1$ on a sample path by using the following lemma: 

\begin{lemma} \emph{(Inductive Comparison)}\label{lem2}
Suppose that a packet is delivered at time $t$ in policy $P_1$ and a packet is delivered at the same time $t$ in policy $\pi_1$. The system state  of policy $P_1$ is $\bm\Delta_{P_1}$ before the packet delivery, which becomes $\bm\Delta_{P_1}'$ after the packet delivery. The system state  of policy $\pi_1$ is $\bm\Delta_{\pi_1}$ before the packet delivery, which becomes $\bm\Delta_{\pi_1}'$ after the packet delivery. Under the conditions of Lemma \ref{coupling}, if (i) the packet generation and arrival times are {synchronized} across the $N$ flows and (ii) 
\begin{equation}\label{hyp1}
\Delta_{[i],P_1} \leq \Delta_{[i],\pi_1},~i=1,\ldots,N,
\end{equation}
then
\begin{equation}\label{law6}
\Delta_{[i],P_1}' \leq \Delta_{[i],\pi_1}',~i=1,\ldots,N.
\end{equation}  
\end{lemma}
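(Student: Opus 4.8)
The plan is to compare the two ordered age vectors by a case analysis on which flow's packet is delivered in policy $\pi_1$ at time $t$. By Lemma~\ref{coupling}, when a delivery occurs at time $t$ in $P_1$, it is the delivery of a packet from the flow holding some order-statistic age $\Delta_{[k],P_1}$, and simultaneously a packet from the flow holding $\Delta_{[k],\pi_1}$ is delivered in $\pi_1$, for the \emph{same} index $k$. I would first treat $P_1$: because $P$ is P-MAF-LGFS and generations/arrivals are synchronized, the flow whose packet is delivered must be a flow whose most-recently-generated packet is in service; by the structural observation stated right after the definition of P-MAF-LGFS, such a flow is one that currently has the \emph{minimum} age, and after the delivery its age drops to $t-S_{\text{last}}$, which is the smallest possible value any flow can have at time $t$ (since no packet was generated after $S_{\text{last}} = \max_i\{S_i : S_i \le t\}$, every flow's age is $\ge t - S_{\text{last}}$ at all times). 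Hence after the delivery in $P_1$, one component of the age vector is reset to the global minimum $t-S_{\text{last}}$ and the rest are unchanged; sorting, this gives $\Delta_{[i],P_1}' = \Delta_{[i],P_1}$ for $i=1,\dots,N-1$ and $\Delta_{[N],P_1}' = t-S_{\text{last}} \le \Delta_{[N],P_1}$ (here $k=N$ for $P_1$, i.e., the delivered flow had the minimum age).

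Next I would handle $\pi_1$: an arbitrary delivery in $\pi_1$ resets one component (the one at sorted position $k$) to some value $v \ge t-S_{\text{last}}$, and if $v$ is still not the smallest it may land anywhere in the sorted order; but in all cases the key elementary fact is that removing one entry from a sorted list of length $N$ and reinserting a smaller-or-equal value yields a new sorted list $\bm\Delta_{\pi_1}'$ satisfying $\Delta_{[i],\pi_1}' \ge \Delta_{[i+1],\pi_1}$ for $i=1,\dots,N-1$ — that is, each order statistic after the delivery is at least the \emph{next} order statistic before the delivery. (The replaced value only ever displaces entries downward in the sorted order by at most one position.) Combining this with the hypothesis \eqref{hyp1}, for $i=1,\dots,N-1$ I get
\begin{align}
\Delta_{[i],\pi_1}' \;\ge\; \Delta_{[i+1],\pi_1} \;\ge\; \Delta_{[i+1],P_1} \;=\; \Delta_{[i],P_1}',\nonumber
\end{align}
and for $i=N$,
\begin{align}
\Delta_{[N],\pi_1}' \;\ge\; t - S_{\text{last}} \;=\; \Delta_{[N],P_1}',\nonumber
\end{align}
which is exactly \eqref{law6}.

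The step I expect to require the most care is establishing the two structural facts about $P_1$: that the flow whose packet is delivered in P-MAF-LGFS necessarily has the \emph{minimum} current age (so that $k=N$), and that its post-delivery age equals the universal lower bound $t-S_{\text{last}}$. The first follows from the defining priority rule — the flow with the maximum age gets its most-recent packet served first, so a flow that still has an undelivered most-recent packet cannot have the minimum age; contrapositively, the flow whose most-recent packet \emph{is} delivered has the minimum age at that instant — but I would want to state this cleanly, including the tie-breaking subtlety (if several flows share the minimum age, the delivered flow is among them, and after the reset all of them equal $t-S_{\text{last}}$, so the sorted vector is still correctly described by resetting the $N$-th component). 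The second fact uses synchronization: LGFS always serves the most recently generated packet, whose generation time is $S_{\text{last}} = \max\{S_i : S_i \le t\}$ under synchronized arrivals, hence the post-delivery age is $t - S_{\text{last}}$, and since $U_{n}(t) \le S_{\text{last}}$ for every flow $n$, every $\Delta_{n}(t) \ge t - S_{\text{last}}$, making this the minimum achievable. Everything else is the elementary "remove-one-insert-smaller" monotonicity of order statistics, which I would verify directly. Finally, I note that the monotone-coupling inequality \eqref{law6} is then propagated inductively over all delivery epochs (with the age vectors between deliveries evolving identically, by a common upward drift of slope $1$, which preserves the componentwise-sorted ordering), and at $t=0^-$ the hypothesis holds with equality since all policies share the same initial age; this inductive propagation, combined with Lemma~\ref{coupling}, yields the sample-path domination that implies Theorem~\ref{thm1}, though that assembly is carried out in the main proof rather than in this lemma.
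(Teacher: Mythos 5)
Your proposal contains a genuine error in the treatment of policy $P_1$, and the final chain of inequalities does not close in the generality the lemma requires. First, the structural claim is backwards: the observation after the definition of P-MAF-LGFS says that a flow has the \emph{minimum} age \emph{after} its most recently generated packet is delivered (and, conversely, a flow that does not have the minimum age still has its freshest packet undelivered). Under P-MAF-LGFS the servers are assigned to the flows with the \emph{largest} ages, so the flow whose packet completes at time $t$ has, before the delivery, some rank $j$ near the top of the sorted age vector --- in the multi-server setting $j$ can be $1,2,\ldots$, not $N$. Hence your statement that $k=N$ and that $\Delta_{[i],P_1}'=\Delta_{[i],P_1}$ for all $i\le N-1$ is false; moreover it contradicts the identity $\Delta_{[i+1],P_1}=\Delta_{[i],P_1}'$ that your final display actually invokes, which is the shift relation valid only when the delivered flow had the maximum age ($j=1$, the single-server case). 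For a general rank $j$ the chain $\Delta_{[i],\pi_1}'\ge\Delta_{[i+1],\pi_1}\ge\Delta_{[i+1],P_1}=\Delta_{[i],P_1}'$ breaks for every $i<j$, because there $\Delta_{[i],P_1}'=\Delta_{[i],P_1}\ge\Delta_{[i+1],P_1}$, so the generic ``remove-one-insert-smaller'' bound on the $\pi_1$ side is too weak. Closing those indices requires the rank-matching in part 3 of Lemma \ref{coupling} on the $\pi_1$ side as well: the $\pi_1$ delivery is from \emph{its} rank-$j$ flow, so $\Delta_{[i],\pi_1}'=\Delta_{[i],\pi_1}$ for $i<j$, and then hypothesis \eqref{hyp1} gives $\Delta_{[i],P_1}'=\Delta_{[i],P_1}\le\Delta_{[i],\pi_1}=\Delta_{[i],\pi_1}'$; this two-regime case split ($i<j$ versus $i\ge j$) is exactly what the paper's proof does and what your proposal discards by treating the $\pi_1$ delivery as arbitrary.

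A secondary inaccuracy: the post-delivery age in $P_1$ is $t-W(t)$ with $W(t)=\max_i\{S_i: A_i\le t\}$, the generation time of the freshest packet that has \emph{arrived} at the queue, not $t-\max\{S_i: S_i\le t\}$; the model allows $S_i<A_i$, and LGFS can only serve arrived packets, so with your definition the claimed equality at index $N$ (and the matching lower bound for $\pi_1$) would be misaligned whenever a packet has been generated but not yet arrived. The remaining ingredients of your outline (the order-statistic monotonicity fact and the inductive propagation between deliveries) are fine, but as written the proof establishes the lemma only in the single-server, $S_i=A_i$ special case, not under the conditions of Lemma \ref{coupling}.
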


\ifreport
\begin{proof}
For synchronized packet generations and arrivals, let $W(t) = \max_i\{S_i: A_i \leq t\}$ 
be the generation time of the freshest packet of each flow that has arrived at the queue by time $t$. At time $t$, because no packet that has arrived at the queue was generated later than $W(t)$, we can obtain
\begin{align}
\Delta_{[i],\pi_1}' \geq t  -W(t),~i=1,\ldots,N.\label{eq_proof_2}
\end{align} 

Because (i) policy $P_1$ follows the same scheduling discipline with the P-MAF-LGFS policy and (ii) the packet generation and arrival times are {synchronized} across the $N$ flows, the delivered packet at time $t$ in policy $P_1$ must be the freshest packet generated at time $W(t)$. Hence, in policy $P_1$, the flow associated with the delivered packet must have the minimum age after the delivery, given by
\begin{align}
\Delta_{[N],P_1}' &= t - W(t). \label{eq_proof_4}
\end{align}
Combining \eqref{eq_proof_2} and \eqref{eq_proof_4}, yields
\begin{align}
\Delta_{[N],P_1}' = t - W(t) \leq \Delta_{[N],\pi_1}'. \label{eq_proof_5}
\end{align}

Moreover, suppose that the packet delivered at time $t$ in policy $P_1$ is from the flow with age value $\Delta_{[j],P_1}$ before the packet delivery. This indicates 
\begin{align}
\Delta_{[i],P_1}' &= \Delta_{[i],P_1},~i=1,2,\ldots, j -1, \label{eq_proof_6}\\
\Delta_{[i],P_1}'& = \Delta_{[i+1],P_1},~i=j,2,\ldots, N-1. \label{eq_proof_7}
\end{align}
According to Lemma \ref{coupling}, the packet delivered at time $t$ in policy $\pi_1$ is from the flow with age value $\Delta_{[j],\pi_1}$ before the packet delivery. Hence, 
\begin{align}
\Delta_{[i],\pi_1}' &= \Delta_{[i],\pi_1},~i=1,2,\ldots, j -1, \label{eq_proof_8}\\
\Delta_{[i],\pi_1}'& \geq \Delta_{[i+1],\pi_1},~i=j,2,\ldots, N-1. \label{eq_proof_9}
\end{align}
Combining \eqref{hyp1}, \eqref{eq_proof_6}, and \eqref{eq_proof_8}, yields
\begin{align}
\Delta_{[i],P_1}' = \Delta_{[i],P_1} \leq \Delta_{[i],\pi_1} = \Delta_{[i],\pi_1}',~i=1,2,\ldots, j -1.\label{eq_proof_10}
\end{align}
Moreover, combining \eqref{hyp1}, \eqref{eq_proof_7}, and \eqref{eq_proof_9}, yields
\begin{align}
\Delta_{[i],P_1}' = \Delta_{[i+1],P_1} \leq \Delta_{[i+1],\pi_1} \leq \Delta_{[i],\pi_1}',\nonumber\\~i=j,2,\ldots, N-1.\label{eq_proof_11}
\end{align}
Finally, \eqref{law6} follows from \eqref{eq_proof_5}, \eqref{eq_proof_10}, and \eqref{eq_proof_11}. This completes the proof. 
\end{proof}
\else
\begin{proof}
See our technical report \cite{SunMultiFlow18}.
\end{proof}
\fi

Now we are ready to prove Theorem \ref{thm1}.
\begin{proof}[Proof of Theorem \ref{thm1}]
Consider any work-conserving policy $\pi\in\Pi$. By Lemma \ref{coupling}, there exist policy $P_1$ and policy $\pi_1$
satisfying the same scheduling disciplines with policy $P$ and policy $\pi$, respectively, and the packet delivery times in policy $P_1$ and policy $\pi_1$ are synchronized almost surely.

For any given sample path of policy $P_1$ and policy $\pi_1$, $\bm\Delta_{P_1}(0^-) = \bm\Delta_{\pi_1}(0^-)$ at time $t=0^-$. We consider two cases:

\emph{Case 1:} When there is no packet delivery, the age of each flow grows linearly with a slope 1. 

\emph{Case 2:} When a packet is successfully delivered, the evolution of the system state is governed by  Lemma \ref{lem2}. 

By induction over time, we obtain
\begin{align}\label{eq_thm1_proof1}
\Delta_{[i],P_1} (t) \leq \Delta_{[i],\pi_1} (t),~i=1,\ldots,N,~t\geq 0.
\end{align}

For any symmetric and non-decreasing   function $p_t$, it holds from \eqref{eq_thm1_proof1} that for all sample paths and all $t\geq 0$
\begin{align}\label{eq_thm1_proof3}
&p_t\circ \bm \Delta_{P_1}(t) \nonumber\\
=& p_t(\Delta_{1,P_1} (t), \ldots, \Delta_{N,P_1} (t))\nonumber\\
=& p_t (\Delta_{[1],P_1} (t), \ldots, \Delta_{[N],P_1} (t))\nonumber\\
\leq & p_t (\Delta_{[1],\pi_1} (t), \ldots, \Delta_{[N],\pi_1} (t))\nonumber\\
=& p_t (\Delta_{1,\pi_1} (t), \ldots, \Delta_{N,\pi_1} (t))\nonumber\\
=& p_t\circ \bm \Delta_{\pi_1}(t).
\end{align}
By Lemma \ref{coupling}, the state process $\{\bm\Delta_{P_1}(t),t\in [0,\infty)\}$ of policy $P_1$ has the same distribution with the state process $\{\bm\Delta_{P}(t),t\in [0,\infty)\}$ of policy $P$;
the state process $\{\bm\Delta_{\pi_1}(t),t\in [0,\infty)\}$ of policy $\pi_1$ has the same distribution with the state process $\{\bm\Delta_{\pi}(t),t\in [0,\infty)\}$  of policy $\pi$. Hence, $\{p_t\circ\bm\Delta_{P_1}(t),t\in [0,\infty)\}$ has the same distribution with $\{p_t\circ\bm\Delta_{P}(t),t\in [0,\infty)\}$; $\{p_t\circ\bm\Delta_{\pi_1}(t),t\in [0,\infty)\}$ has the same distribution with $\{p_t\circ\bm\Delta_{\pi}(t),t\in [0,\infty)\}$. By substituting this and \eqref{eq_thm1_proof3} into Theorem 6.B.30 of \cite{StochasticOrderBook}, we can obtain that \eqref{thm1eq1} holds for all work-conserving policy $\pi\in\Pi$. 

For non-work-conserving policies $\pi$, because the service times are exponentially distributed (i.e., memoryless) and \emph{i.i.d.} across servers and time, server idling only postpones the delivery times of the packets. One can construct a coupling to show that for any non-work-conserving policy $\pi$, there exists a work-conserving policy $\pi'$ whose age process is smaller than that of policy $\pi$ in stochastic ordering; the details are omitted. 
As a result, \eqref{thm1eq1} holds for all policies $\pi\in\Pi$.

Finally, the equivalence between \eqref{thm1eq1} and \eqref{thm1eq2} follows from \eqref{eq_order}. This completes the proof.
\end{proof}

\ignore{

\section{Proof of Theorem \ref{coro2}}\label{appcoro2}
In order to establish Theorem \ref{coro2}, we make a modification to the proof of Theorem \ref{coro1}  as follows: Instead of using on Lemma \ref{coupling}, we introduce a new coupling lemma that is specifically designed to address transmission failures.

\begin{lemma}\emph{(Coupling Lemma)}\label{coupling_2}
In continuous-time status updating systems, consider policy $P$ and any {work-conserving} policy $\pi\in \Pi$. For  any given $\mathcal{I}$, if (i) there are \emph{i.i.d.} transmission failures with a failure probability $q\in (0,1)$,  and (ii)  the packet transmission times are exponentially distributed and \emph{i.i.d.} across  packets,
 then there exist policy $P_1$ and policy $\pi_1$ in the same probability space which satisfy the same scheduling disciplines with policy $P$ and policy $\pi$, respectively,  such that 
\begin{itemize}
\itemsep0em 
\item[1.] the state process $\{\bm\Delta_{P_1}(t),t\in [0,\infty)\}$ of policy $P_1$ has the same distribution as the state process $\{\bm\Delta_{P}(t),t\in [0,\infty)\}$ of policy $P$,
\item[2.] the state process $\{\bm\Delta_{\pi_1}(t),t\in [0,\infty)\}$ of policy $\pi_1$ has the same distribution as the state process $\{\bm\Delta_{\pi}(t),t\in [0,\infty)\}$  of policy $\pi$,
\item[3.] if a packet transmission fails at time $t'$ in policy $P_1$ as $\bm\Delta_{P_1}(t)$ evolves, then almost surely, a packet transmission fails at time $t'$ in policy $\pi_1$ as $\bm\Delta_{\pi_1}(t)$ evolves; and vice versa,
\item[4.] if a packet is successfully delivered  at time $t''$ in policy $P_1$ as $\bm\Delta_{P_1}(t)$ evolves, then almost surely, a packet is successfully delivered  at time $t''$ in policy $\pi_1$ as $\bm\Delta_{\pi_1}(t)$ evolves; and vice versa.
\end{itemize} 
\end{lemma}

\begin{proof}
According to Lemma \ref{coupling}, there exists two policy $P_1$ and policy $\pi_1$ in the same probability space which satisfy the same scheduling disciplines with policy $P$ and policy $\pi$, respectively,  such that 
\begin{itemize}
\itemsep0em 
\item[1.] the state process $\{\bm\Delta_{P_1}(t),t\in [0,\infty)\}$ of policy $P_1$ has the same distribution with the state process $\{\bm\Delta_{P}(t),t\in [0,\infty)\}$ of policy $P$,
\item[2.] the state process $\{\bm\Delta_{\pi_1}(t),t\in [0,\infty)\}$ of policy $\pi_1$ has the same distribution with the state process $\{\bm\Delta_{\pi}(t),t\in [0,\infty)\}$  of policy $\pi$,
\item[3.] if a packet completes transmission (regardless of success or failure) at time $t$ in policy $P_1$ as $\bm\Delta_{P_1}(t)$ evolves, then almost surely, a packet completes transmission (with a success or failure) at time $t$ in policy $\pi_1$ as $\bm\Delta_{\pi_1}(t)$ evolves; and vice versa. 
\end{itemize} 

Recall that the packet generation/arrival times $\mathcal I$, the packet transmission times, and the transmission failures are governed by three mutually independent stochastic processes, none of which are influenced by the scheduling policy. Following the inductive construction argument used in the proof of Theorem 6.B.3 in \cite{StochasticOrderBook}, one can construct packet transmission success and failure events one by one in policy $P_1$ and policy $\pi_1$ such that 
\begin{itemize}
\itemsep0em 
\item[1.] if a packet transmission fails at time $t'$ in policy $P_1$ as $\bm\Delta_{P_1}(t)$ evolves, then almost surely, a packet transmission fails at time $t'$ in policy $\pi_1$ as $\bm\Delta_{\pi_1}(t)$ evolves; and vice versa,
\item[2.] if a packet is delivered successfully at time $t''$ in policy $P_1$ as $\bm\Delta_{P_1}(t)$ evolves, then almost surely, a packet is delivered successfully at time $t''$ in policy $\pi_1$ as $\bm\Delta_{\pi_1}(t)$ evolves; and vice versa.
\end{itemize} 
By this, Lemma \ref{coupling_2} is proven. 
\end{proof}

By employing Lemma \ref{coupling_2} as a replacement for Lemma \ref{coupling} in the proof of Theorem \ref{thm1}, we obtain a proof of Theorem \ref{coro2}.
}

\ifreport

\section{Proof of Theorem \ref{thm3}}\label{app2}

Let $(\bm\Delta_{\pi}(t),\bm\Xi_{\pi}(t))$ represent  the \emph{system state} of policy $\pi$ at time $t$ and $\{(\bm\Delta_{\pi}(t),\bm\Xi_{\pi}(t)),t\in [0,\infty)\}$ be the \emph{state process} of policy $\pi$. For notational simplicity, let policy $P$ represent the NP-MASIF-LGFS policy, which is a non-preemptive, work-conserving policy. 

For single-server  systems, the following \emph{work conservation law} 
plays an important role in the scheduling literature (see, e.g., \cite{Leonard_Kleinrock_book,Jose2010,Gittins:11}):
At any time, the expected total amount of time for completing the packets in the queue is invariant across different work-conserving policies. However, the work conservation law does not hold in multi-server  systems: It often happens that some servers are busy while the rest servers are idle, which leads to inefficient utilization of the idle servers and sub-optimal scheduling performance. 
In the sequel, we use a \emph{weak work-efficiency ordering} \cite{sun2016delay,sun2017delay} to compare different non-preemptive policies for multi-server systems. 


\begin{definition} \label{def_order} \emph{Weak Work-efficiency Ordering \cite{sun2016delay,sun2017delay}:}
For any given $\mathcal{I}$ and a sample path realization of two non-preemptive policies $\pi_1,\pi_2\in\Pi_{np}$, policy $\pi_1$ is said to be \emph{weakly more work-efficient} than policy $\pi_2$, if the following assertion is true:
{For each packet $j$ executed in policy $\pi_2$, if
\begin{itemize}
\item[1.] in policy $\pi_2$, a packet $j$ starts service at time $\tau$ and completes service at time $\nu$ ($\tau\leq \nu$), 
\item[2.] in policy $\pi_1$, the queue is not empty during $[\tau,\nu]$, 
\end{itemize}
then  in policy $\pi_1$, there always exists one corresponding packet $j'$ that starts service during $[\tau,\nu]$. It is worth noting that the weak work-efficiency ordering does not require to specify which servers are used to process packets $j$ and $j'$.} \end{definition}

\begin{figure}
\centering 
\includegraphics[width=0.3\textwidth]{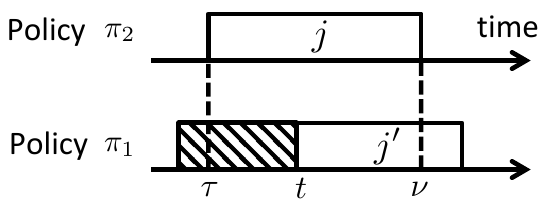} \caption{An illustration of the weak work-efficiency ordering, where the service duration of a packet is indicated by a rectangle, without specifying which servers are used to process the packets. 
Suppose that policy $\pi_1$ is {weakly more work-efficient than} policy $\pi_2$. If (i) a packet $j$ starts service at time $\tau$ and completes service at time $\nu$ in policy $\pi_2$, and (ii) the queue is not empty during $[\tau,\nu]$ in policy $\pi_1$, then in policy $\pi_1$ there exists one corresponding packet $j'$ that starts service at some time $t$ during $[\tau,\nu]$. 
}
\label{Work_Efficiency_Ordering} 
\end{figure} 

An illustration of the weak work-efficiency ordering is provided in Fig. \ref{Work_Efficiency_Ordering}. The weak work-efficiency ordering formalizes the following useful intuition for comparing two non-preemptive policies $\pi_1$ and $\pi_2$: \emph{If one packet $j$ is delivered at time $\nu$ in policy $\pi_2$, then  there exists one corresponding packet $j'$ that has started its transmission shortly before time $\nu$  in policy $\pi_1$, as long as the queue is not empty.} 
The weak work-efficiency ordering  was originally introduced for near-optimal delay minimization in multi-server systems \cite{sun2016delay,sun2017delay}. In this paper, we use it for near-optimal age minimization in multi-server systems. 


The following coupling lemma was established in \cite{sun2017delay} by using the property of NBU distributions:
\begin{lemma}\emph{(Coupling Lemma)} \cite[Lemma 2]{sun2017delay}\label{thm3lem_coupling}
In continuous-time status updating systems, consider two non-preemptive policies $P,\pi\in \Pi_{np}$. For any given $\mathcal{I}$, if (i) policy $P$ is work-conserving, and (ii) the packet service times are NBU, independent across the servers, and \emph{i.i.d.} across  the packets assigned to the same server, then there exist policy $P_1$ and  policy $\pi_1$ in the same probability space which satisfy the same scheduling disciplines with policy $P$ and policy $\pi$, respectively,  such that 
\begin{itemize}
\itemsep0em 
\item[1.] The state process $\{(\bm\Delta_{P_1}(t),\bm\Xi_{P_1}(t)),t\in [0,\infty)\}$ of policy $P_1$ has the same distribution as the state process $\{(\bm\Delta_{P}(t),\bm\Xi_{P}(t)),t\in [0,\infty)\}$ of policy $P$,
\item[2.] The state process $\{(\bm\Delta_{\pi_1}(t),\bm\Xi_{\pi_1}(t)),t\in [0,\infty)\}$ of policy $\pi_1$ has the same distribution as the state process $\{(\bm\Delta_{\pi}(t),\bm\Xi_{\pi}(t)),t\in [0,\infty)\}$  of policy $\pi$,
\item[3.] Policy $P_1$ is weakly more work-efficient than policy $\pi_1$ with probability one. 
\end{itemize} 
\end{lemma}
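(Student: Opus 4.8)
The plan is to build $P_1$ and $\pi_1$ on a common probability space by a forward-in-time, event-by-event sample-path construction, exactly in the spirit of the proof of Theorem 6.B.3 in \cite{StochasticOrderBook} and of Lemma \ref{coupling} above; the one genuinely new ingredient is a coupling of service times that exploits the NBU inequality \eqref{NBU_Inequality}. The deterministic data shared by the two systems are the synchronized generation and arrival times $\mathcal{I}$ and the scheduling rules of $P$ (non-preemptive, work-conserving) and of $\pi$, and the only randomness to be coupled is the sequence of service times drawn by the $M$ servers. Conclusions 1 and 2 of the lemma (equality of the state-process laws) will be automatic provided that whenever the construction assigns a value to a service it is drawn from the correct marginal --- for a server already in the middle of a service, the conditional law of its residual service time, which is a legitimate law to (re)generate thanks to the i.i.d.-within-a-server assumption. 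So the substantive content is conclusion 3.

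The key analytic fact is an elementary consequence of NBU: if $X$ is NBU and $a\ge 0$, then the residual service time $[X-a\mid X>a]$ is stochastically smaller than a fresh copy of $X$, so on a suitable space one can realize a random variable $R$ with the law of $[X-a\mid X>a]$ and a random variable $X'$ with the law of $X$ with $R\le X'$ almost surely --- informally, in an NBU system a server that has already run for time $a$ is ``ahead of'' a server that starts fresh. I would then process the service-start events of $\pi_1$ in increasing order of time. Each time a packet $j$ starts service in $\pi_1$ at a time $\tau$ and is scheduled to finish at $\nu=\tau+X_j$, I inspect $P_1$ at time $\tau$: if the queue of $P_1$ is empty at $\tau$ there is nothing to prove for $j$ (the hypothesis of Definition \ref{def_order} fails), and otherwise, since $P_1$ is work-conserving, all $M$ servers of $P_1$ are busy at $\tau$. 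I pick one such $P_1$-server, say one whose current service began at $\tau-a$, and use the NBU coupling to tie its residual service time $R$ to the fresh service time $X_j$ of packet $j$ so that $R\le X_j$ almost surely; that server then completes at $\tau+R\in[\tau,\nu]$, and because $P_1$ is work-conserving with a nonempty queue it immediately starts a new packet $j'$ at that instant, so $j'$ starts during $[\tau,\nu]$. This is precisely the defining property of Definition \ref{def_order} that makes $P_1$ weakly more work-efficient than $\pi_1$.

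The main obstacle, and the part needing genuine care beyond routine bookkeeping, is the matching argument: one must show that at every step there is a still-unmatched $P_1$-server available to be coupled to the current $\pi_1$ service, and that no single $P_1$-server is forced to ``cover'' two $\pi_1$ services whose intervals $[\tau,\nu]$ overlap --- equivalently, that the supply of service completions, hence of service starts, generated by the work-conserving policy $P_1$ keeps pace, event by event, with the demand generated by $\pi_1$. I would handle this by an induction that tracks, for each server of $P_1$, the most recent time it started a service and whether that service has already been spoken for; the crucial ingredients are that both systems use the same number $M$ of servers and the same arrival stream $\mathcal{I}$, and that whenever a $P_1$-server completes it restarts immediately (the queue being nonempty), so it re-enters the pool of available servers. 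Sub-intervals in which the queue of $P_1$ is empty impose no constraint because the hypothesis of Definition \ref{def_order} is vacuous there, so the accounting closes. Apart from these details the argument is a standard inductive coupling, and the full construction is carried out in \cite[Lemma 2]{sun2017delay}.
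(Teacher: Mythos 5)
The paper offers no in-house proof of Lemma~\ref{thm3lem_coupling}: it simply imports the result, stating that the proof is given in \cite[Lemma 2]{sun2017delay}, so there is no internal argument to compare against beyond that citation. Your sketch --- couple the residual service time of a busy $P_1$-server (stochastically dominated by a fresh service time, by the NBU inequality \eqref{NBU_Inequality}) with the service time of each packet starting in $\pi_1$, and use work-conservation to convert the resulting early completion into a service start of a corresponding packet $j'$ within $[\tau,\nu]$ --- is precisely the key ingredient of the construction in that reference, and, like the paper, you defer the remaining inductive bookkeeping (the server ``matching'' accounting you correctly flag) to the same source, so your treatment is consistent with, and indeed more detailed than, the paper's.
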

The proof of Lemma \ref{thm3lem_coupling} is provided in \cite{sun2017delay}.

We will compare the age of service information of policy $P_1$ and the age of policy $\pi_1$ on a sample path by using the following lemma:

\begin{lemma} \emph{(Inductive Comparison)}\label{thm3lem2}
Suppose that a packet starts service at time $t$ in policy $P_1$ and a packet completes service (i.e., delivered to the destination)  at the same time $t$ in policy $\pi_1$. The system state  of policy $P_1$ is $(\bm\Delta_{P_1},\bm\Xi_{P_1})$ before the service starts, which becomes $(\bm\Delta_{P_1}',\bm\Xi_{P_1}')$ after the service starts. The system state  of policy $\pi_1$ is $(\bm\Delta_{\pi_1},\bm\Xi_{\pi_1})$ before the service completes, which becomes $(\bm\Delta_{\pi_1}',\bm\Xi_{\pi_1}')$ after the service completes.
 If the packet generation and arrival times are {synchronized} across the $N$ flows and
\begin{equation}\label{thm3hyp1}
 \Xi_{[i],P_1} \leq \Delta_{[i],\pi_1},~i=1,\ldots,N,
\end{equation}
then
\begin{equation}\label{thm3law6}
\Xi_{[i],P_1}' \leq \Delta_{[i],\pi_1}',~i=1,\ldots,N.
\end{equation}  
\end{lemma}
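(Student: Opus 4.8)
\emph{Proof proposal.} The plan is to mirror the argument behind Lemma~\ref{lem2}, but to track the age of served information $\bm\Xi_{P_1}$ on the side of policy $P_1$ and the true age $\bm\Delta_{\pi_1}$ on the side of policy $\pi_1$. First I would introduce $W(t)=\max_i\{S_i: A_i\leq t\}$, the generation time of the freshest packet that has arrived at the queue by time $t$, which is well defined whenever a packet is in service and is common to all flows because the generations and arrivals are synchronized. The one universal fact I would record up front is a lower bound: a packet can only start service, or be delivered, after it has arrived, so any such packet was generated no later than $W(t)$; hence for every flow $n$ and every policy, $\Xi_{n,P_1}(t)\geq t-W(t)$ and $\Delta_{n,\pi_1}(t)\geq t-W(t)$, and in particular $\Xi_{[N],P_1}\geq t-W(t)$ and $\Delta_{[N],\pi_1}'\geq t-W(t)$.

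Next I would analyze the service-start event in $P_1$. By the NP-MASIF-LGFS discipline, the packet entering service at time $t$ is from the flow $n^*$ that then has the maximum age of served information, i.e.\ $\Xi_{n^*,P_1}=\Xi_{[1],P_1}$; and by synchronization plus the LGFS rule, the packet selected is the freshest one of flow $n^*$, generated at time $W(t)$, which is still in the queue whenever $\Xi_{[1],P_1}>t-W(t)$. Therefore $\Xi_{n^*,P_1}$ falls from $\Xi_{[1],P_1}$ to $t-W(t)$, which by the first step is the smallest of the $N$ values, while every other flow is untouched. Hence, exactly as in the proof of Lemma~\ref{lem2} with $j=1$,
\begin{align}
\Xi_{[i],P_1}'=\Xi_{[i+1],P_1},~i=1,\ldots,N-1,\qquad \Xi_{[N],P_1}'=t-W(t).\nonumber
\end{align}
On the side of $\pi_1$, the packet delivered at time $t$ lowers the age of a single flow and leaves the other ages unchanged, so the sorted age vector can only shift: $\Delta_{[i],\pi_1}'\geq\Delta_{[i+1],\pi_1}$ for $i=1,\ldots,N-1$, together with $\Delta_{[N],\pi_1}'\geq t-W(t)$ from the first step. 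Combining the two displays with the hypothesis \eqref{thm3hyp1}: for $i=1,\ldots,N-1$ we get $\Xi_{[i],P_1}'=\Xi_{[i+1],P_1}\leq\Delta_{[i+1],\pi_1}\leq\Delta_{[i],\pi_1}'$, and for $i=N$ we get $\Xi_{[N],P_1}'=t-W(t)\leq\Delta_{[N],\pi_1}'$, which is \eqref{thm3law6}.

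The step I expect to need the most care is the characterization of the $P_1$ side: verifying that the packet entering service is genuinely the freshest packet of the flow with the (then) maximum age of served information, so that $\Xi_{n^*,P_1}$ drops all the way to $t-W(t)$ rather than to some intermediate value. This is exactly where synchronization is indispensable, as it is in Lemma~\ref{lem2}; without it the served packet of $n^*$ need not be the globally freshest one, the post-service value of $\Xi_{n^*}$ need not be the minimum, and the clean shift of the sorted vector would break.

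Finally, I would dispatch the degenerate sub-case in which every flow already sits at the minimum $t-W(t)$ (so the packet entering service in $P_1$ may be an older one that does not change any $\Xi$-value). This case is immediate: then $\bm\Xi_{P_1}'=\bm\Xi_{P_1}$ has all entries equal to $t-W(t)$, so $\Xi_{[i],P_1}'=t-W(t)\leq\Delta_{[i],\pi_1}'$ follows directly from $\Delta_{[i],\pi_1}'\geq t-W(t)$ for every $i$, and \eqref{thm3law6} holds without invoking \eqref{thm3hyp1} at all.
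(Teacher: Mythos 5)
Your proposal is correct and follows essentially the same route as the paper's proof: the same lower bound $t-W(t)$ from synchronization, the same characterization that the MASIF-LGFS service start shifts the sorted $\bm\Xi_{P_1}$ vector down by one position with $\Xi_{[N],P_1}'=t-W(t)$, the same shift inequality $\Delta_{[i],\pi_1}'\geq\Delta_{[i+1],\pi_1}$ on the $\pi_1$ side, and the same combination with the hypothesis. Your explicit treatment of the degenerate sub-case where all flows already have age of served information $t-W(t)$ (so the packet entering service may be stale) is a small but welcome refinement that the paper's proof passes over silently.
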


\begin{proof}
For synchronized packet generations and arrivals,  let $W(t) = \max\{S_i: A_i \leq t\}$ 
be the generation time of the freshest packet of each flow that has arrived at the queue by time $t$. At time $t$, because no packet that has arrived at the queue was generated later than $W(t)$, we can obtain
\begin{align}
\Xi_{[i],P_1}' \geq t-W(t),~i=1,\ldots,N,\\
\Delta_{[i],\pi_1}' \geq t-W(t),~i=1,\ldots,N.\label{thm3eq_proof_2}
\end{align} 

Because policy $P_1$ follows the same scheduling discipline with the NP-MASIF-LGFS policy, each  packet starts service in policy $P_1$ must be from the flow with the maximum  age of served information $\Xi_{[1],P_1}$ (denoted as flow $n^*$), and the delivered packet must be the freshest packet that was generated at time $W(t)$. In other words, the age of served information of flow $n^*$ is reduced from the maximum age of served information $\Xi_{[1],P_1}$ to the minimum age of served information $\Xi_{[N],P_1}'=t-W(t)$, and the ages of served information of the other $(N-1)$ flows remain unchanged. Hence, 
\begin{align}\label{thm3eq_proof_3}
\Xi_{[i],P_1}' &= \Xi_{[i+1],P_1},~i=1,\ldots,N-1,\\
\Xi_{[N],P_1}' &= t - W(t). \label{thm3eq_proof_4}
\end{align}

In policy $\pi_1$, the delivered packet can be any packet from any flow. For all possible cases of policy $\pi_1$, it must hold that 
\begin{align}\label{thm3eq_proof_1}
\Delta_{[i],\pi_1}' \geq \Delta_{[i+1],\pi_1},~i=1,\ldots,N-1. 
\end{align}
By combining \eqref{thm3hyp1}, \eqref{thm3eq_proof_3}, and \eqref{thm3eq_proof_1}, we have
\begin{align}
\Delta_{[i],\pi_1}' \geq \Delta_{[i+1],\pi_1} \geq \Xi_{[i+1],P_1} = \Xi_{[i],P_1}',~i=1,\ldots,N-1.\nonumber
\end{align}
In addition, combining \eqref{thm3eq_proof_2} and \eqref{thm3eq_proof_4}, yields
\begin{align}
\Delta_{[N],\pi_1}' \geq  t-W(t) = \Xi_{[N],P_1}'.\nonumber
\end{align}
By this, \eqref{thm3law6} is proven.
\end{proof}
Now we are ready to prove Theorem \ref{thm3}.
\begin{proof}[Proof of Theorem \ref{thm3}]
Recall that policy $P$ is non-preemptive and work-conserving. 
 Consider any non-preemptive policy $\pi\in\Pi_{np}$.
By Lemma \ref{thm3lem_coupling}, there exist policy $P_1$ and policy $\pi_1$
satisfying the same scheduling disciplines with policy $P$ and policy $\pi$, respectively, and policy $P_1$ is weakly more work-efficient than policy $\pi_1$ with probability one.

Next, we construct another policy $\pi_1'$ in the same probability space with policy  $P_1$ and policy $\pi_1$: 
Because policy $P_1$ is weakly more work-efficient than policy $\pi_1$ with probability one, if
\begin{itemize}
\item[1.] in policy $\pi_1$, a packet $j$ starts service at time $\tau$ and completes service at time $\nu$ ($\tau\leq \nu$),
\item[2.] in policy $P_1$, the queue is \emph{not empty} during $[\tau,\nu]$,
\end{itemize}
then  in policy $P_1$, there exists one corresponding packet $j'$ that starts service during $[\tau,\nu]$. Let  $t\in [\tau,\nu]$ be the service starting time of packet $j'$ in policy $P_1$, then  in policy $\pi_1'$, packet $j$ is constructed to start service at time $\tau$ and complete service at time $t$, as illustrated in Fig. \ref{Work_Efficiency_Ordering2}. On the other hand, if
\begin{itemize}
\item[1.] in policy $\pi_1$, a packet $j$ starts service at time $\tau$ and completes service at time $\nu$ ($\tau\leq \nu$),
\item[2.] in policy $P_1$, the queue is \emph{empty} during $[\tau,\nu]$,
\end{itemize}
then in policy $\pi_1'$, packet $j$ is constructed to start service at time $\tau$ and complete service at time $\nu$. 
The initial age of policy $\pi_1'$ is chosen to be the same as that of other policies. Hence, $ \bm\Delta_{\pi_1'}(0^-) = \bm\Delta_{P_1}(0^-) = \bm\Delta_{\pi_1}(0^-)$.

The  policy $\pi_1'$ constructed above satisfies the following two useful properties: 
\begin{figure}
\centering 
\includegraphics[width=0.3\textwidth]{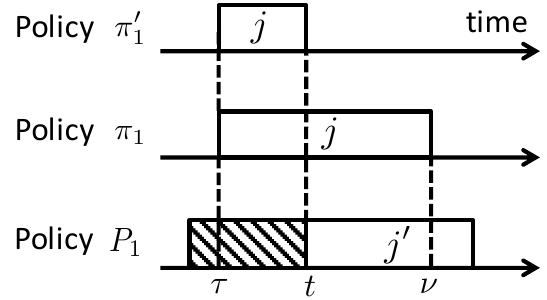} \caption{An illustration of the construction of policy $\pi'_1$, where the queue is \emph{not empty} during $[\tau,\nu]$ in policy $P_1$. The service completion time $t$ of packet $j$ in policy $\pi_1'$ is smaller than the service completion time $\nu$ of packet $j$ in policy $\pi$, and is equal to the service starting time $t$ of packet $j'$ in policy $P_1$. }
\label{Work_Efficiency_Ordering2} 
\end{figure}

\emph{Property (i):} The service completion time of each packet in policy $\pi_1'$ is equal to or earlier than that in policy $\pi$. 
Hence, 
\begin{align}\label{eq_thm3_proof2}
\bm\Delta_{\pi_1'}(t) \leq \bm\Delta_{\pi_1}(t), ~t\in[0,\infty)
\end{align}
holds with probably one. 

\emph{Property (ii):} If the queue is not empty at time $t$ in policy $P_1$ and a packet completes service  at time $t$ in policy $\pi_1'$, then a packet starts service at the same time $t$ in policy $P_1$.

Next, we use \emph{Property (ii)} to  show that, almost surely, 
\begin{align}\label{eq_thm3_proof1}
\Xi_{[i],P_1} (t) \leq \Delta_{[i],\pi_1'} (t),~i=1,\ldots,N,~t\geq 0.
\end{align}

At time $t= 0^-$, because $\bm\Xi_{P_1}(0^-) \leq \bm\Delta_{P_1}(0^-)$ and $\bm\Delta_{P_1}(0^-) = \bm\Delta_{\pi_1'}(0^-)$, we can obtain $\bm\Xi_{P_1}(0^-) \leq \bm\Delta_{\pi_1'}(0^-)$. This further implies that 
\begin{align}
\Xi_{[i],P_1} (0^-) \leq \Delta_{[i],\pi_1'} (0^-),~i=1,\ldots,N. 
\end{align}
For any time $t> 0$, there could be three cases:

\emph{Case 1:} if the queue is empty at time $t$ in policy $P_1$, then \eqref{eq_thm3_proof1} holds naturally at time $t$ because all packets have started services in policy $P_1$ (otherwise, the queue is not empty). 

\emph{Case 2:} if the queue is not empty at time $t$ in policy $P_1$ and a packet completes service at time $t$ in policy $\pi_1'$, according to \emph{Property (ii)}, a packet starts service at time $t$ in policy $P_1$. Hence, the evolution of the system state before and after time $t$ is governed by  Lemma \ref{thm3lem2}. 

\emph{Case 3:} if the queue is not empty at time $t$ in policy $P_1$ and no packet completes service at time $t$ in policy $\pi_1'$, there may exist some packet that starts service at time $t$ in policy $P_1$. Therefore, the age of each flow in policy $\pi_1'$  grows linearly with a slope 1 at time $t$; the age of served information of each flow in policy $P_1$ may either grow linearly with a slope 1 or drop to a lower value. By comparison, the age of served information of each flow in policy $P_1$ grows at a speed no faster than the age of each flow in policy $\pi_1'$.  

By induction over time and considering the above three cases, \eqref{eq_thm3_proof1} is proven.

Furthermore, for any symmetric and non-decreasing   function $p_t$, it holds from \eqref{eq_thm3_proof2} and \eqref{eq_thm3_proof1} that for all sample paths and all $t\geq 0$
\begin{align}\label{eq_thm3_proof3}
&p_t\circ \bm \Xi_{P_1}(t) \nonumber\\
=& p_t(\Xi_{1,P_1} (t), \ldots, \Xi_{N,P_1} (t))\nonumber\\
=& p_t (\Xi_{[1],P_1} (t), \ldots, \Xi_{[N],P_1} (t))\nonumber\\
\leq & p_t (\Delta_{[1],\pi_1'} (t), \ldots, \Delta_{[N],\pi_1'} (t))\nonumber\\
=& p_t (\Delta_{1,\pi_1'} (t), \ldots, \Delta_{N,\pi_1'} (t))\nonumber\\
=& p_t\circ \bm \Delta_{\pi_1'}(t)\nonumber\\
\leq & p_t\circ \bm \Delta_{\pi_1}(t).
\end{align}
By Lemma \ref{thm3lem_coupling}, the state process $\{(\bm\Delta_{P_1}(t),\bm\Xi_{P_1}(t)),t\in [0,\infty)\}$ of policy $P_1$ has the same distribution with the state process $\{(\bm\Delta_{P}(t),\bm\Xi_{P}(t)),t\in [0,\infty)\}$ of policy $P$;
the state process $\{(\bm\Delta_{\pi_1}(t),\bm\Xi_{\pi_1}(t)),t\in [0,\infty)\}$ of policy $\pi_1$ has the same distribution with the state process $\{(\bm\Delta_{\pi}(t),\bm\Xi_{\pi}(t)),t\in [0,\infty)\}$  of policy $\pi$. Hence, $\{p_t\circ\bm\Xi_{P_1}(t),t\in [0,\infty)\}$ 
has the same distribution with $\{p_t\circ\bm\Xi_{P}(t),t\in [0,\infty)\}$; $\{p_t\circ\bm\Delta_{\pi_1}(t),t\in [0,\infty)\}$ 
has the same distribution with $\{p_t\circ\bm\Delta_{\pi}(t),t\in [0,\infty)\}$. By substituting this and \eqref{eq_thm3_proof3} into Theorem 6.B.30 of \cite{StochasticOrderBook}, we can obtain that \eqref{thm3eq1} holds for all policy $\pi\in\Pi_{np}$. According to  \eqref{eq_order}, the first inequality in \eqref{thm3eq2} is equivalent to   \eqref{thm3eq1}. The second inequality in \eqref{thm3eq2} holds naturally. This completes the proof.
\end{proof}

\section{Proof of Theorem \ref{thm4}}\label{app_thm4}

Let the age vector $\bm\Delta_{\pi}(t) = (\Delta_{1,\pi} (t),\ldots,\Delta_{N,\pi} (t))$ represent the \emph{system state} of policy $\pi$ at time $t$ and $\{\bm\Delta_{\pi}(t),t=0,T_s,2T_s,\ldots\}$ be the \emph{state process} of policy $\pi$. Recall that $\Delta_{[i],\pi}(t)$ is the $i$-th largest component of the age vector $\bm\Delta_{\pi}(t)$. For notational simplicity, let policy $P$ represent the DT-MAF-LGFS policy, which is a non-preemptive, work-conserving policy. We first present two lemmas that are useful to prove Theorem \ref{thm4}.

\begin{lemma}\emph{(Coupling Lemma)}\label{coupling_4}
In discrete-time status updating systems, consider policy $P$ and any {non-preemptive, work-conserving} policy $\pi\in \Pi_{np}$. For any given $\mathcal{I}$, if (i) the transmission errors are \emph{i.i.d.} with an error probability $q\in [0,1)$ and (ii) the transmission time of each packet is equal to $T_s$,
 then there exist policy $P_1$ and policy $\pi_1$ in the same probability space which satisfy the same scheduling disciplines with policy $P$ and policy $\pi$, respectively,  such that 
\begin{itemize}
\itemsep0em 
\item[1.] the state process $\{\bm\Delta_{P_1}(t),t=0,T_s,2T_s,\ldots\}$ of policy $P_1$ has the same distribution as the state process $\{\bm\Delta_{P}(t),t=0,T_s,2T_s,\ldots\}$ of policy $P$,
\item[2.] the state process $\{\bm\Delta_{\pi_1}(t),t=0,T_s,2T_s,\ldots\}$ of policy $\pi_1$ has the same distribution as the state process $\{\bm\Delta_{\pi}(t),t=0,T_s,2T_s,\ldots\}$  of policy $\pi$,
\item[3.] if a packet from the flow with age $\Delta_{[i],P_1}(t)$ at time $t$ is successfully delivered  at time $(t + T_s)$ in policy $P_1$, then almost surely, a packet from the flow with age $\Delta_{[i],\pi_1}(t)$ at time $t$ is successfully delivered  at time $(t+ T_s)$ in policy $\pi_1$; and vice versa.
\end{itemize} 
\end{lemma}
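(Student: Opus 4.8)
The plan is to mirror the proof of Lemma~\ref{coupling}, with the memoryless property of the exponential distribution replaced by the fact that in a discrete-time system a transmission occupies exactly one slot. The starting point would be two structural observations that hold because every policy in $\Pi_{np}$ is non-preemptive and work-conserving and each transmission lasts exactly $T_s$: (a) at every epoch $t\in\{0,T_s,2T_s,\ldots\}$ all $M$ servers are idle, since any packet in service started at $t-T_s$ and finished at $t$; hence during the slot $[t,t+T_s)$ exactly $n_\pi(t)=\min\{M,Q_\pi(t)\}$ fresh transmissions are started, where $Q_\pi(t)$ is the number of flows whose freshest packet has arrived by $t$ but is still undelivered under $\pi$; and (b) since nothing is in service at an epoch, $Q_\pi(t)$ depends only on $\mathcal I$ and on the set of successful deliveries so far. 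Fact~(a) fixes how many servers act in a slot, and fact~(b) lets equalities be propagated between the two policies.

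Next I would place $P_1$ and $\pi_1$ on a common probability space. Both use the given realization of $\mathcal I$, which is independent of the policy and of the channel. I introduce one array $\{B_{t,j}\}$ of i.i.d.\ Bernoulli$(1-q)$ variables ($B_{t,j}=1$ meaning ``success''), and in slot $[t,t+T_s)$ I attach $B_{t,j}$, \emph{in each policy}, to the active server transmitting the freshest packet of the flow with the $j$-th largest age among that policy's active flows (ties broken by the rule the policy uses). Because the errors are i.i.d.\ across packets, not influenced by the scheduler, and a server handles at most one packet per slot, this relabelling leaves the joint law of the age process of either policy unchanged; building the slots one at a time exactly as in the inductive construction behind Theorem~6.B.3 of \cite{StochasticOrderBook} yields policies $P_1,\pi_1$ obeying the same scheduling disciplines as $P,\pi$ with the state-process laws of items~1 and~2.

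The core is then a single induction over the epochs $0,T_s,2T_s,\ldots$ establishing at once that $Q_{P_1}(t)=Q_{\pi_1}(t)$ and that item~3 holds on $[t-T_s,t)$. The base case is the common initial age. For the step, equal $Q$'s at time $t$ give the same number of active servers $n(t)=\min\{M,Q(t)\}$ in $[t,t+T_s)$; since $P_1$ follows DT-MAF-LGFS and the generation/arrival times are synchronized, its active flows are precisely the $n(t)$ flows of largest age (a flow whose freshest packet has been delivered has the minimum age), so the flow receiving outcome $B_{t,j}$ has age $\Delta_{[j],P_1}(t)$; the coupled outcome $B_{t,j}$ is attached in $\pi_1$ to the flow of the $j$-th largest age among its active flows, so a packet from the flow with age $\Delta_{[j],P_1}(t)$ is delivered at $t+T_s$ in $P_1$ iff $B_{t,j}=1$ iff the correspondingly age-ranked in-service flow of $\pi_1$ delivers at $t+T_s$, which is item~3; and both policies make the same number $\sum_{j=1}^{n(t)}B_{t,j}$ of successful deliveries in the slot. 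By fact~(b), equal delivery counts keep $Q_{P_1}(t+T_s)=Q_{\pi_1}(t+T_s)$, closing the induction.

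The step I expect to be the main obstacle is making the ``re-index the i.i.d.\ errors by the age-rank of the served flow'' coupling rigorous while keeping both marginal laws intact: one must argue that, conditional on a policy's history up to $t$, the slot-$t$ outcomes of its active servers are exchangeable i.i.d.\ Bernoulli$(1-q)$, so they may be freely matched to the age-ranked active flows of the other policy, and one must carry the backlog-equality invariant along so that the number of servers to be matched agrees in the two policies in every slot. This is exactly where non-preemption and work-conservation are used: they are what make ``number of active servers $=\min\{M,Q(t)\}$'' and ``nothing in service at an epoch'' valid, hence what makes the recursion for $Q(t)$ policy-independent. A subsidiary point to dispatch is that transmissions of packets that are not the freshest of their flow never change any age, so they can be ignored when matching; this is why it suffices to track the $Q_\pi(t)$ defined in~(b) rather than the full queue occupancy.
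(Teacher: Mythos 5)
Your high-level plan—construct the two policies slot by slot on a common probability space, hand out i.i.d.\ Bernoulli$(1-q)$ outcomes matched by age rank, and invoke the inductive construction of Theorem 6.B.3 of the stochastic-ordering reference—is the same route the paper takes. However, the extra scaffolding you build the induction on contains steps that fail. (i) An arbitrary non-preemptive, work-conserving $\pi\in\Pi_{np}$ does \emph{not} start $\min\{M,Q_\pi(t)\}$ ``fresh'' transmissions per slot: work conservation constrains the number of busy servers through the whole queue content (subject to the one-server-per-flow restriction), and $\pi$ is free to serve packets that are not the freshest of their flow, or flows whose freshest packet is already delivered. (ii) Your invariant $Q_{P_1}(t)=Q_{\pi_1}(t)$ is false. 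Take $M=1$, $N=2$, with each flow holding two packets generated at $0$ and at $T_s$: if $P_1$ serves the freshest packet of the max-age flow while $\pi_1$ serves that same flow's stale packet, a coupled success removes that flow from $Q_{P_1}$ but leaves $Q_{\pi_1}$ unchanged, so equal per-slot success counts do not propagate the equality. (iii) The claim that deliveries of non-freshest packets ``never change any age'' is wrong—delivering a packet fresher than what the destination currently holds reduces the age—and such a delivery is exactly an event of the form ``a packet from the flow with age $\Delta_{[i],\pi_1}(t)$ is successfully delivered,'' which item 3 must match; excluding these transmissions from the coupling (and giving them independent outcomes) breaks both directions of item 3 with positive probability in the example above. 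Relatedly, you attach $B_{t,j}$ in $\pi_1$ to the $j$-th largest age \emph{among its active flows}, whereas item 3 refers to $\Delta_{[j],\pi_1}(t)$, the $j$-th largest age among all $N$ flows; these coincide for the MAF policy $P_1$ but not for an arbitrary $\pi_1$.

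The paper's own argument needs none of this bookkeeping: it couples the success/failure outcome of each transmission with the outcome of the transmission (if any) from the flow of the same \emph{overall} age rank in the other policy, irrespective of which packet of that flow is in service, constructing these events one at a time; no backlog invariant and no restriction to fresh packets appear. The fact that under $P_1$ (DT-MAF-LGFS with synchronized arrivals) every delivered packet is the freshest one generated by $W(t)$ is not used in the coupling at all—it is exploited only afterwards, in the inductive comparison Lemma \ref{thm4lem2}. To repair your write-up, drop $Q_\pi(t)$ and the freshness-based matching, and couple the per-slot Bernoulli outcomes by the overall age rank of the serving flow for whatever packets the two policies choose to transmit.
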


\begin{proof}
By employing the inductive construction argument used in the proof of Theorem 6.B.3 in \cite{StochasticOrderBook}, one can construct the packet transmission success and failure events one by one in policy $P_1$ and policy $\pi_1$ to prove this lemma. 
In particular, since the transmission errors are \emph{i.i.d.} and they are not influenced by the scheduling policy, it is feasible to couple the packet transmission success and failure events in policy $P_1$ and policy $\pi_1$ in such a way that a packet from the flow with age $\Delta_{[i],P_1}(t)$ at time $t$ is successfully delivered at time $(t+ T_s)$ in policy $P_1$ if, and only if, a packet from the flow with age $\Delta_{[i],\pi_1}(t)$ at time $t$ is successfully delivered  at time $(t+ T_s)$ in policy $\pi_1$. \end{proof}

Notice that policy $P_1$ and policy $\pi_1$ are two distinct policies, so the flow with age $\Delta_{[i],P_1}(t)$ in policy $P_1$ and the flow with age $\Delta_{[i],\pi_1}(t)$ at time $t$ in policy $\pi_1$ are likely representing different flows. However, policy $P_1$ and policy $\pi_1$ are coupled in such a way  that  the packet deliveries for these two flows occur simultaneously at time $(t+ T_s)$.

We will compare policy $P_1$ and policy $\pi_1$ on a sample path by using the following lemma: 

\begin{lemma} \emph{(Inductive Comparison)}\label{thm4lem2}
Under the conditions of Lemma \ref{coupling_4}, if  (i) the packet generation and arrival times are {synchronized} across the $N$ flows and (ii) 
\begin{equation}\label{thm4lem2eq1}
\Delta_{[i],P_1}(t) \leq \Delta_{[i],\pi_1}(t),~i=1,\ldots,N,
\end{equation}
then
\begin{equation}\label{thm4lem2eq2}
\Delta_{[i],P_1}(t+T_s)\leq \Delta_{[i],\pi_1}(t+T_s),~i=1,\ldots,N.
\end{equation}  
\end{lemma}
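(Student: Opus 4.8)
The plan is to adapt the single-delivery argument of Lemma~\ref{lem2} to one discrete time step $[t,t+T_s]$, during which every flow's age increases by exactly $T_s$ and, simultaneously, each packet that began service at time $t$ either succeeds or fails at time $t+T_s$ (there are no in-progress transmissions at epoch $t$ because every transmission lasts exactly $T_s$). First I would introduce, as in the proof of Lemma~\ref{lem2}, the quantity $W(t)=\max\{S_i:A_i\leq t\}$, the generation time of the freshest packet available to the scheduler at time $t$; by the synchronization assumption this is common to all flows. Set $G:=(t+T_s)-W(t)$. Since no destination can hold a packet generated after $W(t)$, every flow has age at least $t-W(t)$ at time $t$ and hence at least $G$ at time $t+T_s$, so $G$ is the floor of all post-step ages.

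Next I would invoke the coupling of Lemma~\ref{coupling_4}: there is a common set $R\subseteq\{1,\dots,N\}$ of age-ranks, ranks taken with respect to the time-$t$ age vectors, such that in both policy $P_1$ and policy $\pi_1$ the flow of rank $i$ has a successful delivery at time $t+T_s$ if and only if $i\in R$. I then compute, coordinate by coordinate, the post-step age of the rank-$i$ flow in each policy before re-sorting. In policy $P_1$ (which follows the DT-MAF-LGFS discipline) every packet served from a flow is that flow's most recently generated packet, generated at $W(t)$, and a flow whose freshest packet has already been delivered has age exactly $t-W(t)$; hence for $i\in R$ the rank-$i$ flow's post-step age equals $G$, and for $i\notin R$ it equals $\Delta_{[i],P_1}(t)+T_s$ (the flow was not served, or its transmission failed, or it received only an obsolete packet under Step~4, none of which changes its age). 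In policy $\pi_1$, for $i\in R$ the rank-$i$ flow receives some packet that started service at time $t$, hence generated no later than $W(t)$, so its post-step age is at least $G$; for $i\notin R$ there is no successful delivery and its post-step age equals $\Delta_{[i],\pi_1}(t)+T_s$. Denote these un-sorted post-step values by $u_i$ for $P_1$ and $v_i$ for $\pi_1$. Then $u_i\leq v_i$ for every $i$: when $i\in R$ because $u_i=G\leq v_i$, and when $i\notin R$ by hypothesis \eqref{thm4lem2eq1}.

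It remains to pass from the coordinatewise inequality $u_i\leq v_i$ to the order-statistic inequality \eqref{thm4lem2eq2}. This follows from the elementary monotonicity of order statistics: $u_{[k]}=\max_{|S|=k}\min_{i\in S}u_i\leq\max_{|S|=k}\min_{i\in S}v_i=v_{[k]}$ for every $k$. Since the multiset of ages of $P_1$ at time $t+T_s$ is exactly $\{u_1,\dots,u_N\}$ and that of $\pi_1$ is $\{v_1,\dots,v_N\}$, we obtain $\Delta_{[i],P_1}(t+T_s)=u_{[i]}\leq v_{[i]}=\Delta_{[i],\pi_1}(t+T_s)$, which is \eqref{thm4lem2eq2}. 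The main obstacle, compared with the continuous-time Lemma~\ref{lem2}, is the bookkeeping of simultaneous deliveries under transmission errors within a single step: one must fuse the $+T_s$ growth of all ages with the resets at delivered flows, correctly use the coupling to match successful deliveries rank-for-rank across two policies that may serve entirely different flows, and dispose of the DT-MAF-LGFS Step~4 corner cases (obsolete deliveries to flows that already hold their freshest packet and therefore sit at the floor age $t-W(t)$). Once this accounting is set up, the remaining steps are routine and mirror the continuous-time case.
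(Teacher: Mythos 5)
Your proposal is correct and follows essentially the same route as the paper's proof: the same coupling of successful deliveries by time-$t$ age rank, the same floor $t+T_s-W(t)$ from synchronized arrivals, the observation that under DT-MAF-LGFS every rank with a successful delivery drops to that floor while all other ranks grow by $T_s$, and then a comparison of the sorted age vectors. The only (minor) difference is cosmetic: you finish via coordinatewise domination of the rank-indexed post-step ages plus monotonicity of order statistics, whereas the paper explicitly tracks which sorted positions the failed and delivered flows occupy; you also make explicit the Step-4 obsolete-delivery corner case that the paper glosses over.
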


\begin{proof}
For synchronized packet generations and arrivals, let $W(t) = \max_i\{S_i: A_i \leq t\}$ 
be the generation time of the freshest packet of each flow that has arrived at the queue by time $t$. Because (i) the packet transmission time is $T_s$ and (ii) no packet that has arrived at the queue by time $t$ was generated after time $W(t)$, we can obtain
\begin{align}
\Delta_{[i],\pi_1}(t+T_s)\geq t + T_s -W(t),~i=1,\ldots,N.\label{thm4lem2eq3}
\end{align} 
Without loss of generality, suppose that there are $l$ transmission errors and $(N-l)$ successful packet deliveries at time $(t+T_s)$ in policy $P_1$. 
Because (i) policy $P_1$ follows the same scheduling discipline with the DT-MAF-LGFS policy and (ii) the packet generation and arrival times are {synchronized} across the $N$ flows, each delivered packet must be the freshest packet generated at time $W(t)$. Hence, the flows associated with these delivered packets must have the minimum age at time $(t+T_s)$, given by
\begin{align}
\Delta_{[i],P_1}(t+T_s) &= t + T_s- W(t),~i=l+1,\ldots,N. \label{thm4lem2eq4}
\end{align}
Combining \eqref{thm4lem2eq3} and \eqref{thm4lem2eq4}, yields
\begin{align}
\Delta_{[i],P_1}(t+T_s) = t + T_s- W(t) \leq \Delta_{[i],\pi_1}(t+T_s),\nonumber\\ ~i=l+1,\ldots,N. \label{thm4lem2eq16}
\end{align}

Moreover, suppose that the transmission errors at time $(t+T_s)$ are from the flows with age values $(\Delta_{[j_1],P_1}(t),\Delta_{[j_2],P_1}(t),\ldots,\Delta_{[j_{l}],P_1}(t))$ at time $t$, which are sorted such that $j_1\geq j_2\geq \ldots \geq j_l$. Because $\Delta_{[i],P_1}(t)$ is the $i$-th largest component of the age vector $\bm\Delta_{P_1}(t)$, we have
\begin{align}
 \Delta_{[j_1],P_1}(t) \geq \Delta_{[j_2],P_1}(t)\geq\ldots\geq\Delta_{[j_{l}],P_1}(t). \label{thm4lem2eq8}
\end{align} 
If flow $n$ is one of the flows that encounter a transmission error at time $t+T_s$ in policy $P_1$, then
\begin{align}
\Delta_{n,P_1}(t+T_s) = \Delta_{n,P_1}(t) +T_s. \label{thm4lem2eq10}
\end{align}
From \eqref{thm4lem2eq4}, \eqref{thm4lem2eq8}, and \eqref{thm4lem2eq10}, the components of vector $\bm\Delta_{P_1}(t+T_s)$
are $\Delta_{[j_1],P_1}(t)+T_s,\Delta_{[j_2],P_1}(t)+T_s,\ldots,\Delta_{[j_{l}],P_1}(t)+T_s$ and $(N-l)$ numbers with the same value $t + T_s- W(t)$. Hence,
\begin{align}
\Delta_{[i],P_1}(t+T_s) = \Delta_{[j_i],P_1}(t) +T_s, ~i=1,\ldots,l. \label{thm4lem2eq5}
\end{align}

According to Lemma \ref{coupling_4}, there are $l$ transmission errors at time $(t+T_s)$ in policy $\pi_1$, which are from the flows with age values $(\Delta_{[j_1],\pi_1}(t),\Delta_{[j_2],\pi_1}(t),\ldots,\Delta_{[j_{l}],\pi_1}(t))$ at time $t$. Because $j_1\geq j_2\geq \ldots \geq j_l$, we have
 \begin{align}
 \Delta_{[j_1],\pi_1}(t) \geq \Delta_{[j_2],\pi_1}(t)\geq\ldots\geq\Delta_{[j_{l}],\pi_1}(t). \label{thm4lem2eq12}
\end{align} 
If flow $n$ is one of the flows that encounter a transmission error at time $t+T_s$ in policy $\pi_1$, then
\begin{align}
\Delta_{n,\pi_1}(t+T_s) = \Delta_{n,\pi_1}(t) +T_s. \label{thm4lem2eq13}
\end{align}
From \eqref{thm4lem2eq12} and \eqref{thm4lem2eq13}, one can observe that $\Delta_{[j_1],\pi_1}(t)+T_s,\Delta_{[j_2],\pi_1}(t)+T_s,\ldots,\Delta_{[j_{l}],\pi_1}(t)+T_s$ are $l$ components of vector $\bm\Delta_{\pi_1}(t+T_s)$. Hence, 
\begin{align}
\Delta_{[i],\pi_1}(t+T_s) \geq \Delta_{[j_i],\pi_1}(t) +T_s, ~i=1,\ldots,l. \label{thm4lem2eq6}
\end{align}
Combining \eqref{thm4lem2eq1}, \eqref{thm4lem2eq5}, and \eqref{thm4lem2eq6}, yields
\begin{align}
&\Delta_{[i],P_1}(t+T_s) \nonumber\\
= &\Delta_{[j_i],P_1}(t) +T_s  \nonumber\\
\leq &\Delta_{[j_i],\pi_1}(t) +T_s \nonumber\\
\leq & \Delta_{[i],\pi_1}(t+T_s), ~i=1,\ldots,l. \label{thm4lem2eq7}
\end{align}
Finally, \eqref{thm4lem2eq2} follows from \eqref{thm4lem2eq16} and \eqref{thm4lem2eq7}. This completes the proof. 
\end{proof}

Now we  prove Theorem \ref{thm4}.
\begin{proof}[Proof of Theorem \ref{thm4}]
Consider any non-preemptive, work-conserving policy $\pi\in\Pi_{np}$. By Lemma \ref{coupling_4}, there exist policy $P_1$ and policy $\pi_1$
satisfying the same scheduling disciplines with policy $P$ and policy $\pi$, respectively, such that if a packet from the flow with age $\Delta_{[i],P_1}(t)$ at time $t$ is successfully delivered  at time $(t + T_s)$ in policy $P_1$, then almost surely, a packet from the flow with age $\Delta_{[i],\pi_1}(t)$ at time $t$ is successfully delivered  at time $(t+ T_s)$ in policy $\pi_1$; and vice versa.

For any given sample path of policy $P_1$ and policy $\pi_1$, the initial system state is $\bm\Delta_{P_1}(0) = \bm\Delta_{\pi_1}(0)$ at time $t=0$. The evolution of the system state is governed by  Lemma \ref{thm4lem2}. By induction over time, we obtain
\begin{align}\label{eq_thm4_proof1}
\Delta_{[i],P_1} (t) \leq \Delta_{[i],\pi_1} (t),~i=1,\ldots,N,~t = 0, T_s, 2T_s,\ldots.
\end{align}
The rest of the proof is quite similar to that of Theorem \ref{thm1} and hence are omitted. 
\end{proof}

\fi

\end{document}